
\RequirePackage[l2tabu, orthodox]{nag}
\RequirePackage{snapshot}

\documentclass[10pt,onecolumn]{extarticle}

\sloppy
\hfuzz2pt

\makeatletter
\if@twocolumn
  \usepackage[dvips,letterpaper,top=0.5in, bottom=0.5in, left=0.75in, right=0.5in,includefoot,heightrounded]{geometry}
\else
  \usepackage[dvips,letterpaper,margin=1in,includefoot,heightrounded]{geometry}
\fi

\usepackage{srcltx}

\usepackage[russian,portuges,english]{babel}

\iflanguage{portuges}
    {\newcommand{\keywordname}{Palavras-chaves}}
    {\newcommand{\keywordname}{Keywords}}

\usepackage{amsmath}
\usepackage{amssymb,amsfonts}

\usepackage{soul} %

\usepackage{abstract}

\usepackage{graphicx}
\usepackage[usenames,dvipsnames,svgnames,x11names]{xcolor}

\usepackage{booktabs}

\usepackage{setspace}
\usepackage{flushend}

\usepackage{cite}

\usepackage{hyperref}\urlstyle{same}
\usepackage[normalem]{ulem}

\usepackage{enumerate}

\usepackage[noend]{algpseudocode}

\usepackage{listings}

\lstset{language=Octave,
        frame=single,
        backgroundcolor=\color{yellow!10},
        basicstyle=\footnotesize,
        keywordstyle=\color{blue},
        commentstyle=\color[RGB]{28,172,0},
        stringstyle={\color[RGB]{170,55,241} \ttfamily},
        numberblanklines=false,
        numbers=left,
        numberstyle={\tiny \color{black}},
        numbersep=9pt,
        breaklines=true,
        keywordstyle=[2]{\color{black}},
        identifierstyle=\color{black},
        showstringspaces=false,
}

\usepackage[short,12hr]{datetime}
       \usepackage{fouriernc}

\usepackage{amsthm}   %

\newtheorem{theorem}{Theorem}[section]
\newtheorem{lemma}[theorem]{Lemma}
\newtheorem{corollary}[theorem]{Corollary}
\newtheorem{remark}[theorem]{Remark}

\newtheorem{algorithm}[theorem]{Algorithm}

\newcommand{\be}{\begin{equation}}
\newcommand{\ee}{\end{equation}}

\newcommand{\bD}{{\bf D}}

\newcommand{\bP}{{\bf P}}
\newcommand{\bI}{{\bf I}}

\newcommand{\bC}{{\bf C}}
\newcommand{\bA}{{\bf A}}

\newcommand{\bfe}{{\bf e}}

\makeatletter

\newcommand{\printtitle}{%
\makeatletter
\if@twocolumn

\twocolumn[%
  \maketitle
  \begin{onecolabstract}
    \myabstract
  \end{onecolabstract}
  \begin{center}
    \small
    \textbf{\keywordname}
    \\\medskip
    \mykeywords
  \end{center}
  \bigskip
]
\saythanks
\else
  \maketitle
  \begin{onecolabstract}
    \myabstract
  \end{onecolabstract}
  \begin{center}
    \small
    \textbf{\keywordname}
    \\\medskip
    \mykeywords
  \end{center}
  \bigskip
  \onehalfspacing
\fi
\makeatother
}

\author{%
S. M. Perera%
\thanks{S. M. Perera is with the Department of Mathematics, Embry-Riddle Aeronautical University, Daytona Beach, Florida 32114, USA (e-mail: pereras2@erau.edu).}
\and
A. Madanayake%
\thanks{A. Madanayake is with Department of Electrical and Computer Engineering, Florida International University, Miami, FL 33174, USA (e-mail: amadanay@fiu.edu).}
\and
R. J. Cintra%
\thanks{%
R. J. Cintra is with the
Signal Processing Group,
UFPE, Brazil
(e-mail: \url{rjdsc@de.ufpe.br}).}
}

\title{%
Efficient and Self-Recursive Delay Vandermonde Algorithm for Multi-Beam Antenna Arrays}

\newcommand{\myabstract}{%
This paper presents a self-contained factorization for the delay
Vandermonde matrix (DVM), which is the super class of the discrete
Fourier transform, using sparse and companion matrices. An efficient
DVM algorithm is proposed to reduce the complexity
of radio-frequency (RF) $N$-beam analog beamforming systems. There exist applications for wideband multi-beam beamformers
in wireless communication networks such as 5G/6G systems, system capacity can be improved
by exploiting the improvement of the signal to noise ratio (SNR) using coherent summation of propagating waves based on their directions of propagation. The presence of a multitude of RF beams allows multiple independent wireless links to be established at high SNR, or used
in conjunction with multiple-input multiple-output (MIMO) wireless systems, with the overall goal of improving system SNR and therefore capacity.
To realize such multi-beam beamformers at  acceptable analog circuit complexities, we use sparse factorization of the
DVM in order to derive a low arithmetic complexity DVM algorithm. The paper
also establishes an error bound and stability analysis of the proposed
DVM algorithm. The proposed efficient DVM algorithm is aimed at implementation using analog realizations. For purposes of evaluation, the algorithm can be realized using both digital hardware as well as software defined radio platforms.
}

\newcommand{\mykeywords}{%
Delay Vandermonde matrix, Efficient algorithms, Self-recursive algorithms, Complexity and performance of algorithms, Approximation algorithms, Wireless communications, Beamforming, Software defined radio
}

\date{}

\begin{document}

\printtitle

\section{Introduction}
\label{intro}
The demand for wireless data communication networks having increased capacity and data transfer rates is growing at a tremendous rate.
The wireless networks are on the verge of rolling out their newest generation of mobile networks; the so-called fifth generation (5G) network, which is expected to be the underlying data transfer network of emerging technologies such as the wireless internet of things (IoT), networks on chip, body-area networks and cyberphysical systems (CPS). Exponentially growing demands for capacity require exponentially growing bandwidth for a given SNR level.
Emerging 5G and 6G wireless networks are built on mm-wave (mmW) bands (typically, 20-600~GHz),
which are of sufficiently high frequencies to allow the necessary growth in system bandwidth.
A 5G wireless data connection may operate around 60~GHz (indoors) and operate at about 200-1000~MHz of bandwidth,
which shows 10- to 25-fold growth in capacity~\cite{R8}.
Such levels of growth are typical of a new generation of wireless networks.
In this paper, we address the problem of obtaining a multitude of directional mmW RF beams using digital signal processing. The paper aims to reduce to arithmetic complexity of the beamforming operation that is based on multiplication of a Vandermonde matrix with an input vector obtained from the array of antennas. The proposed multi-beam signal processor is uses a low-complexity algorithm that enables acceptable digital circuit complexity in order to achieve multi-beam RF beams for a base-station.

The paper is organized as follows. Section~\ref{sec:nbeam} introduces the concept of wideband multi-beam beamforming for wireless communications.
Section~\ref{sec:fac} proposes a self-contained factorization for the DVM and an efficient DVM algorithm, while Section~\ref{sec:com} contains the derivation of arithmetic complexity and elaborate numerical results of the proposed DVM algorithm. Section \ref{sec:bound} furnishes the derivation of a theoretical error bound, establishes numerical results, and addresses the numerical stability of the proposed DVM algorithm. In Section \ref{sec:appln}, applications of the DVM matrix factorization in the engineering discipline will be discussed. Finally, Section~\ref{sec:con} concludes the paper.
\begin{figure*}
\center
 \includegraphics[scale=1.1] {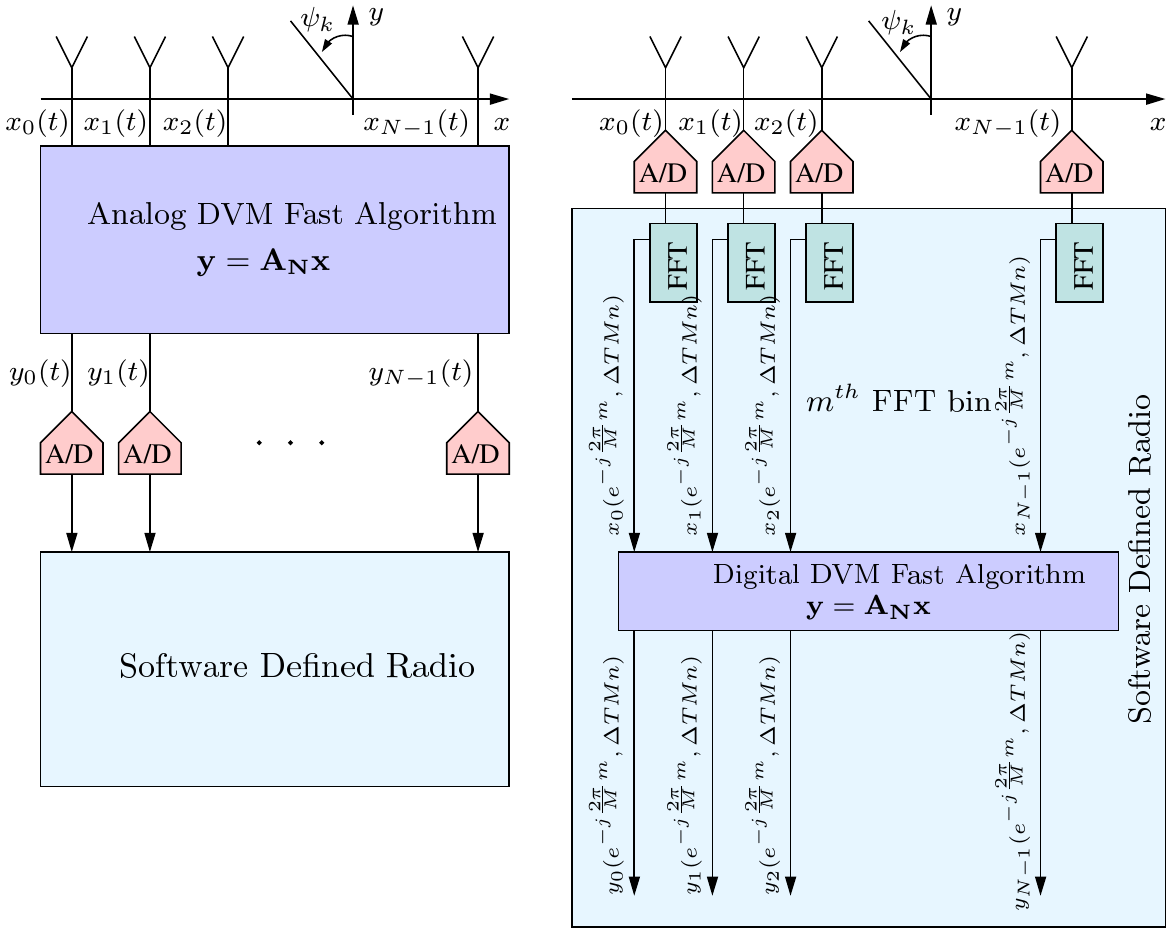}
\caption{A linear array with analog antenna outputs $x_k(t)$ for achieving $N$-beam wideband multi-beam beamformer using left) an analog DVM circuit  that realizes a spatial linear transform ${\bf  y}={\bf A}_N {\bf x}$ employing true time delay analog blocks for RF beams $y_k(t)$; and right) the digital signal flow graph for a parallel digital hardware or software defined radio approach in which an $M$-point discrete Fourier transform (DFT) is computed along the temporal dimension for binning the sampled wideband signals in  temporal frequency domain. Each signal component temporal frequency bin is applied to the DVM algorithm by computing the efficient DVM algorithm for $M$ separate values of $\alpha=e^{-j\omega \tau}$ where $\omega=\frac{2\pi m}{M}.$ In both cases, the DVM algorithm  leads to $N$-beams by coherently combining radio waves in $N$ discrete directions $\psi_k=1,2,\cdots ,N$. Here, temporal sample blocks consist of $M$ samples for the fast Fourier transforms (FFTs). }
\label{F1}
\end{figure*}

\section{Multi-Beam Wideband Array Signal Processing}\label{sec:nbeam}
\subsection {Review of Antenna Beamforming}

The coherent combination of multiple antennas is known as beamforming~\cite{R1, R2}. Fig.\ref{F1} shows an overview of an array processing receiver that operates on an $N$-element array of antennas by applying a spatial linear transform
${\bf A}_N$ across the array outputs $x_k(t), k=1,2, \cdots, N$ in continuous-time to produce a number of continuous-time outputs $y_k(t)$ corresponding to the RF beams.

The linear transform typically takes the form of a spatial discrete Fourier transform, implemented via spatial FFT, and leads to $N$ number of RF beams corresponding to directions pertaining to spatial frequencies of the incident waves that match DFT bin frequencies $2\pi k/N.$  The use of an FFT produces beams that have a frequency dependent axis because it can be shown that the beam orientation is a function of wavelength.

By progressively delaying each antenna by a multiple of a constant time delay,
the RF energy can be directed in a particular direction in a frequency independent manner. For example, let $X_k(e^{j\omega})$ be the Fourier transform of the input $x_k(t)$ for the array outputs $k=1,2, \cdots, N$. The application of a linear delay of duration  $\tau$ causes a corresponding phase rotation by $\omega\tau$. In the frequency domain, the output of the delay becomes $X_k(e^{j\omega})e^{-j\omega\tau}.$ The application of delays
 $\tau kl$ to the $k^{th}$ antenna,  where $l\in\mathbb{Z}$ is an integer causes signal components to be rotated by the frequency dependent phase $\omega \tau kl$. Delay and sum operations - described by a Vandermonde matrix (DVM) by input vector product- leads to $N$ RF beams that have beam orientations $\psi_k,k=1,2, \cdots, N$ that have no dependence on the wavelength.

Example beamformers for arrays can be found in~\cite{SVNA18, R10, R11, R12, R13, R14, R15, R16, R17, R18, VT02}. These systems are \emph{ limited in number of beams, although they do possess relatively high numbers of antennas,} due to the inherent computational/circuit complexity of such multi-beam systems.
Unlike DFT  beamformers,  delay-based Vandermonde matrix multi-beam beamformers have frequency independent beam angles- a property that is desired in wideband basestations.

\subsection{Wideband $N$-Beam Algorithms}
\label{sec:wideband}
For a linear-array with inter-element spacing
$\Delta x$ and speed of light $c$,
the marginal time-difference of arrival between antennas for direction $\psi$ is
$\tau = \frac{\Delta x}{c}\sin\psi$.
The $k$th true-time-delay RF beam can be realized by coherently summing the antenna signals $l=0,1, \cdots, N-1$ such that the beam output is
$y_k(t)=x_0(t)+x_1(t-k\tau)+\cdots+x(t-k(N-1)\tau)$.

Assuming inter-element spacing $\Delta x =\lambda_{min}/2$,
the $N$~simultaneous RF beams in the discrete domain can be written as
the product
of
an $N\times N$ DVM containing the frequency-dependent phase-rotations
and
an input signal vector with elements $x_k(t)$ consisting of the spatial signals from the uniform linear array of antennas.

The wideband multi-beam beamforming algorithm therefore consists of the computation of a DVM-vector product ${\bf y}={\bf A}_N {\bf x}$ at time $t \in \mathbb{R}$ where ${\bf x}$ and ${\bf y}$ are input and output vectors containing signals $x_k(t)$ and $y_k(t)$ respectively.
In our previous work~\cite{SVNA18},
we have proposed a low-complexity DVM algorithm
using the product of complex 1-band upper and lower matrices.
The DVM algorithm in~\cite{SVNA18} extends the results in~\cite{OP00, OA04, Y05}
utilizing complex nodes without considering quasiseparability
and displacement equations as in~\cite{GO941, GO942, P17}.
Moreover,
we have addressed error bounds and stability of the DVM algorithm in~\cite{SVNA18} by filling the gaps in~\cite{OP00, OA04, Y05}.  There are several mathematical techniques available to derive radix-2 and split-radix FFT algorithms,
as described in~\cite{CT65, S86, VL92, JF07, RKH10,blahut2010}. On the other hand, among the known approximation algorithms which run in quadratic arithmetic time to compute Vandermonde matrices by a vector, the author in \cite{P15} presented linear arithmetic time approximation algorithm to compute Vandermonde matrices by vector. But our main intention in this paper is to propose an exact algorithm with self-contained factors not an approximation algorithm.

Even though the derivation of size $N$ DFT into two size $\frac{N}{2}$ DFTs can be done easily,
the extension of this idea to the DVM is cumbersome
as the useful DFT properties are \textbf{not} necessarily present in the DVM case. However, one could still derive an efficient algorithm for the delay Vandermonde matrix as an polynomial evaluation problem.

For wideband analog inputs where $\omega$ spans a continuous range of values, the phase rotations are to be realized using analog delay lines. In analog realizations the DVM fast algorithm is realized in an analog circuit consisting of wideband delays, amplifiers and adders. The low arithmetic complexity of the proposed algorithms will result in correspondingly low circuit complexity for wideband analog multi-beam beamforming circuits.
For purposes of fast verification using software models, the proposed DVM algorithms  assumes a particular  input frequency (i.e., a constant $\omega$) for the incident waves. Such a simplification allows us to compute the relevant matrix-vector product using a computer based numerical simulation model.

During software-based numerical verification, we assume the input signals are over-sampled in time. This is because all computer models much be discrete in time. For sampled digital signal processing systems, the algorithms can be applied at a particular value of $\omega$ in the temporal frequency domain by first computing temporal FFTs for the antenna channels, and then applying the DVM algorithm for each bin of a temporal  $M$-point DFTs.  In such temporally-sampled software/digital implementations, each input $x_k(t)$  becomes $x_k(e^{-j2\pi m/M},t), m=0,1, \cdots,M-1$ and corresponding outputs $y_k(e^{-j2\pi m/M},t) $ where $t=M\Delta Tn$ for temporal sample period $T$ and $M$-point temporal FFT block number $n$. There needs to be $M$ parallel digital circuits, or $M$ calls to the DVM algorithm in software realizations of the fast algorithm  to process all of the temporal frequency bins. There will be call to the DVM algorithm for each temporal FFT output bin at $e^{-j2\pi m/M}$) in order to support wideband operation, and therefore $M$ number of calls to the algorithm in order to process wideband signals transformed by $M$-point temporal FFTs. In our analysis, the code implementations assume $M=1$ because the aim is to numerically model the efficient DVM  algorithm.

 The reported arithmetic complexities scale linearly with $M$ for finer temporal
$M$-point DFTs. For notational simplicity, we will simply use $x_k(t)$ and $y_k(t)$ for describing the efficient DVM algorithm keeping in mind the above mentioned details when considering analog circuit  or software/digital implementations.
\section{Self-Contained Factorizations and fast DVM Beamforming Algorithm}
\label{sec:fac}
The DVM  ${\bf A}_N=[e^{-\tau kl}]_{k=1,l=0}^{N,N-1}$, is a Vandermonde structured matrix with complex entries. Here, we defined $\alpha \equiv e^{-j\omega\tau}$ for notational convenience.  Recall $\tau =\Delta x/c$ is a time delay. On the other hand, the DFT matrix is also a well known Vandermonde structured matrix having $N^{\rm th}$ roots of unity as nodes. In contrast to the DFT, however,
the DVM does not necessarily possess
nice properties,
such as unitary, periodicity, symmetry, and circular shift.

The DVM is defined using distinct complex nodes
$\alpha, \alpha^2, \ldots, \alpha^N$ and hence it is non-singular.
The matrix $\mathbf{A}_N$ can be scaled as
${\bf A}_N=\tilde{\bf A}_N {\bf D}_N$,
where
$\tilde{\bf A}_N=[\alpha^{kl}]_{k,l=0}^{N-1}$
and
${\bf D}_N=\operatorname{diag}[\alpha^{k}]_{k=0}^{N-1}$.
In the following, we will provide a self-contained sparse factorization for $\tilde{\bf A}_N$ followed by the DVM (i.e. ${\bf A}_N$) over complex nodes $\alpha, \alpha^2, \ldots, \alpha^N$.

\begin{lemma}
\label{DVMLem}
Let the scaled delay Vandermonde matrix
$\tilde{\bf A}_{N, \alpha}=[\alpha^{kl}]_{k,l=0}^{N-1}$
be
defined by nodes
$\{1, \alpha, \alpha^2, \ldots, \alpha^{N-1}\} \in \mathbb{C}$
and
$N=2^t$
($t \geq 1$).
Then $\tilde{\bf A}_{N, \alpha}$ can be factored into
\be
\tilde{\bf A}_{N, \alpha}={\bf P}_N^T
\begin{bmatrix}
 \tilde{\bf A}_{\frac{N}{2}, \alpha^2}& \\
& \tilde{\bf A}_{\frac{N}{2}, \alpha^2}
\end{bmatrix}
\left[
\begin{array}{c|c}
 {\bf I}_{\frac{N}{2}}&  {\bf C}_{\frac{N}{2}}^{\frac{N}{2}}\\
\hline\\
\tilde{\bf D}_{\frac{N}{2}} & \alpha^{\frac{N}{2}} {\bf C}_{\frac{N}{2}}^{\frac{N}{2}} \tilde{\bf D}_{\frac{N}{2}}
\end{array}
\right]
,
\label{eq:sDVM}
\ee
where
${\bf P}_N$ is the even-odd permutation matrix, $\tilde{\bf A}_{\frac{N}{2}, \alpha^2}=\left[\alpha^{2kl}\right]_{k,l=0}^{\frac{N}{2}-1}$,
$\bI_{\frac{N}{2}}$ is the identity matrix,
$\tilde{\bD}_{\frac{N}{2}}=\operatorname{diag}[\alpha^{l}]_{l=0}^{\frac{N}{2}-1}$,
and
$\bC_{\frac{N}{2}}$ is the companion matrix defined by the monic polynomial
$p(z)=(z-1)(z-\alpha^2)(z-\alpha^4)\cdots(z-\alpha^{N-2})$
\end{lemma}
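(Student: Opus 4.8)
The plan is to verify the matrix identity~\eqref{eq:sDVM} by applying both sides to an arbitrary vector ${\bf x}=(x_0,x_1,\ldots,x_{N-1})^T\in\mathbb{C}^N$ and reading the resulting computation as a polynomial evaluation. Associate to ${\bf x}$ the polynomial $P(z)=\sum_{l=0}^{N-1}x_l z^l$; then the $k$th component of $\tilde{\bf A}_{N,\alpha}{\bf x}$ is $P(\alpha^k)$, so the left-hand side returns $\big(P(\alpha^0),P(\alpha^1),\ldots,P(\alpha^{N-1})\big)^T$. Writing the exponents as $2j$ and $2j+1$ for $j=0,1,\ldots,N/2-1$, the even-indexed components are $P\big((\alpha^2)^j\big)$ and the odd-indexed components are $P\big(\alpha(\alpha^2)^j\big)$. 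Listing the even-exponent values before the odd-exponent ones is precisely the action of the even-odd permutation, i.e.\ $\bP_N\tilde{\bf A}_{N,\alpha}{\bf x}=\big((P((\alpha^2)^j))_j,\,(P(\alpha(\alpha^2)^j))_j\big)^T$, so it suffices to show that the block-diagonal matrix times the right-hand block matrix, applied to ${\bf x}$, reproduces this stacked vector.

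For the even block, note that the points $(\alpha^2)^j$, $j=0,\ldots,N/2-1$, are exactly the roots of the monic polynomial $p(z)$, so $P\big((\alpha^2)^j\big)=r\big((\alpha^2)^j\big)$ where $r(z)=P(z)\bmod p(z)$ has degree $<N/2$. If $r(z)=\sum_{i=0}^{N/2-1}r_i z^i$, then $\big(r((\alpha^2)^j)\big)_j=\tilde{\bf A}_{N/2,\alpha^2}\,(r_i)_i$, which accounts for the diagonal blocks $\tilde{\bf A}_{N/2,\alpha^2}$. To express $(r_i)_i$ in terms of ${\bf x}$ I will use the companion matrix $\bC_{N/2}$ of $p$ in the convention in which it represents multiplication by $z$ on the monomial basis $\{1,z,\ldots,z^{N/2-1}\}$ of $\mathbb{C}[z]/(p(z))$; in that convention ${\bf e}_{j+1}$ represents $z^j$, hence the $(j{+}1)$st column of $\bC_{N/2}^{N/2}$ is the coefficient vector of $z^{N/2+j}\bmod p$. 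Since $r=\sum_{l=0}^{N-1}x_l\,(z^l\bmod p)$ and $z^l\bmod p=z^l$ for $l<N/2$, splitting ${\bf x}=({\bf x}^{(0)},{\bf x}^{(1)})$ into first and second halves gives $(r_i)_i={\bf x}^{(0)}+\bC_{N/2}^{N/2}{\bf x}^{(1)}$, which is exactly the top block row $\begin{bmatrix}\bI_{N/2}&\bC_{N/2}^{N/2}\end{bmatrix}$ of~\eqref{eq:sDVM}.

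For the odd block I will use the substitution $w=z/\alpha$: since $\alpha(\alpha^2)^j$ is $\alpha$ times the root $(\alpha^2)^j$ of $p$, we have $P\big(\alpha(\alpha^2)^j\big)=\bar P\big((\alpha^2)^j\big)$ with $\bar P(w):=P(\alpha w)=\sum_{l=0}^{N-1}(\alpha^l x_l)w^l$. The coefficient vector of $\bar P$, split at $N/2$, is $\tilde{\bD}_{N/2}{\bf x}^{(0)}$ on top and $\alpha^{N/2}\tilde{\bD}_{N/2}{\bf x}^{(1)}$ on the bottom, so reapplying the even-block argument to $\bar P$ yields $\big(P(\alpha(\alpha^2)^j)\big)_j=\tilde{\bf A}_{N/2,\alpha^2}\big(\tilde{\bD}_{N/2}{\bf x}^{(0)}+\alpha^{N/2}\bC_{N/2}^{N/2}\tilde{\bD}_{N/2}{\bf x}^{(1)}\big)$, i.e.\ the bottom block row $\begin{bmatrix}\tilde{\bD}_{N/2}&\alpha^{N/2}\bC_{N/2}^{N/2}\tilde{\bD}_{N/2}\end{bmatrix}$. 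Stacking the two halves and premultiplying by $\bP_N^T$ (the inverse even-odd permutation) gives~\eqref{eq:sDVM}, and since ${\bf x}$ was arbitrary the matrix identity follows. Note that all steps are valid for every $\alpha\in\mathbb{C}$ because $p$ is monic, so the quotient ring and the companion matrix are well defined; distinctness of the nodes is needed only for non-singularity of the factors, not for the factorization.

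I expect the main difficulty to be bookkeeping rather than conceptual: fixing the companion-matrix convention so that $\bC_{N/2}^{N/2}$ is precisely the operator folding the high-order coefficients of $P$ back into the remainder modulo $p$, and tracking how the rescaling $w=z/\alpha$ produces the diagonal $\tilde{\bD}_{N/2}$ together with the scalar $\alpha^{N/2}$ in the lower block while still letting the \emph{same} polynomial $p$ --- hence the same $\bC_{N/2}$ and the same $\tilde{\bf A}_{N/2,\alpha^2}$ --- serve both the even and the odd halves. It is worth checking $N=2$ and $N=4$ explicitly to confirm the direction of the permutation ($\bP_N$ versus $\bP_N^T$) and that the coefficient indexing in $(r_i)_i={\bf x}^{(0)}+\bC_{N/2}^{N/2}{\bf x}^{(1)}$ is correct.
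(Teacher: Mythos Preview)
Your proof is correct and takes a genuinely different route from the paper's. The paper proceeds by direct block-matrix manipulation: after applying $\bP_N$ it writes each of the four blocks of $\bP_N\tilde{\bf A}_{N,\alpha}$ as $\tilde{\bf A}_{N/2,\alpha^2}$ flanked by diagonal matrices, obtaining in particular $\hat{\bD}_{N/2}\tilde{\bf A}_{N/2,\alpha^2}$ with $\hat{\bD}_{N/2}=\operatorname{diag}(\alpha^{kN})_k$ on the left. It then invokes the eigenvector identity $\tilde{\bf A}_{N/2,\alpha^2}\,\bC_{N/2}=\breve{\bD}_{N/2}\,\tilde{\bf A}_{N/2,\alpha^2}$ (rows of the Vandermonde matrix are left eigenvectors of the companion matrix) and raises it to the $\tfrac{N}{2}$th power to swap the left diagonal $\hat{\bD}_{N/2}=\breve{\bD}_{N/2}^{N/2}$ for the right factor $\bC_{N/2}^{N/2}$, after which the factorization falls out. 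Your argument instead reads $\tilde{\bf A}_{N,\alpha}{\bf x}$ as polynomial evaluation, identifies $\bC_{N/2}$ with multiplication by $z$ in $\mathbb{C}[z]/(p)$ so that $\bC_{N/2}^{N/2}$ is exactly the reduction $z^{N/2+j}\mapsto z^{N/2+j}\bmod p$, and handles the odd block via the substitution $w=z/\alpha$. The two arguments are dual views of the same structural fact (companion matrix $\leftrightarrow$ quotient ring), but what they buy is different: the paper's version is a short matrix computation requiring only the one-line identity above and no quotient-ring language, while yours explains \emph{why} the companion matrix of $p$ is the right object and why the same $\bC_{N/2}$ and $\tilde{\bf A}_{N/2,\alpha^2}$ serve both halves, and it generalizes transparently to other node sets by changing $p$. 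Your caveat about checking the companion-matrix convention is well placed; the paper's $\bC_{N/2}$ (equation~\eqref{compM}) is indeed the multiplication-by-$z$ operator on the monomial basis, so your coefficient identity $(r_i)_i={\bf x}^{(0)}+\bC_{N/2}^{N/2}{\bf x}^{(1)}$ matches.
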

\begin{proof}
We show (\ref{eq:sDVM}) by divide-and-conquer technique. We first permute rows of $\tilde{\bf A}_N$ by multiplying with $\bP_N$ and then write the result as the block matrices:
\be
\begin{aligned}
\bP_N\tilde{\bf A}_{N, \alpha}&=\\
&
\left[
\begin{array}{c|c}
\left[\alpha^{2kl}\right]_{k,l=0}^{\frac{N}{2}-1} & \left[\alpha^{2k\left(\frac{N}{2}+l\right)}\right]_{k,l=0}^{\frac{N}{2}-1} \\
\hline
\left[\alpha^{(2k+1)l}\right]_{k,l=0}^{\frac{N}{2}-1}& \left[\alpha^{(2k+1)\left(\frac{N}{2}+l\right)}\right]_{k,l=0}^{\frac{N}{2}-1}
\end{array}
\right]
\end{aligned}
\label{1eq}
\ee
Now, we consider (1,2), (2,1), and (2,2) blocks of $\bP_N\tilde{\bf A}_{N, \alpha}$ (\ref{1eq}) and represent each of these by $\tilde{\bf A}_{\frac{N}{2}, \alpha^2}$ and the product of diagonal matrices.
\\
For (1,2) block of (\ref{1eq}) we get:
\be
\left[\alpha^{2k\left(\frac{N}{2}+l\right)}\right]_{k,l=0}^{\frac{N}{2}-1}={\rm diag}(\alpha^{kN})_{k=0}^{\frac{N}{2}-1} \cdot \begin{bmatrix}
\alpha^{2kl}
\end{bmatrix}_{k,l=0}^{\frac{N}{2}-1}.
\label{1eq12}
\ee
For (2,1) block of (\ref{1eq}) we get:
\be
\left[\alpha^{(2k+1)l}\right]_{k,l=0}^{\frac{N}{2}-1}= \begin{bmatrix}
\alpha^{2kl}
\end{bmatrix}_{k,l=0}^{\frac{N}{2}-1}\cdot {\rm diag}(\alpha^{l})_{l=0}^{\frac{N}{2}-1}.
\label{1eq21}
\ee
For (2,2) block of (\ref{1eq}) we get:
\be
\begin{aligned}
&
\left[\alpha^{(2k+1)\left(\frac{N}{2}+l\right)}\right]_{k,l=0}^{\frac{N}{2}-1}
\\
&
= \alpha^{\frac{N}{2}}
{\rm diag}(\alpha^{kN})_{k=0}^{\frac{N}{2}-1}\begin{bmatrix}
\alpha^{2kl}
\end{bmatrix}_{k,l=0}^{\frac{N}{2}-1}{\rm diag}(\alpha^{l})_{l=0}^{\frac{N}{2}-1}.
\end{aligned}
\label{1eq22}
\ee
Thus by (\ref{1eq12}), (\ref{1eq21}), and (\ref{1eq22}) we can state (\ref{1eq}) as:
\[
\bP_N\tilde{\bf A}_{N, \alpha}=\left[
\begin{array}{c|c}
 \tilde{\bf A}_{\frac{N}{2}, \alpha^2} &  \hat{\bD}_{\frac{N}{2}}\tilde{\bf A}_{\frac{N}{2}, \alpha^2}\\
\hline\\
\tilde{\bf A}_{\frac{N}{2}, \alpha^2} \tilde{\bD}_{\frac{N}{2}} & \alpha^{\frac{N}{2}} \hat{D}_{\frac{N}{2}} \tilde{\bf A}_{\frac{N}{2}, \alpha^2} \tilde{\bD}_{\frac{N}{2}}
\end{array}
\right],
\]
where $\hat{\bD}_{\frac{N}{2}}={\rm diag}(\alpha^{kN})_{k=0}^{\frac{N}{2}-1}$ and  $\tilde{\bD}_{\frac{N}{2}}={\rm diag}(\alpha^{l})_{l=0}^{\frac{N}{2}-1}$.Set $p(z)=(z-1)(z-\alpha^2)(z-\alpha^4)\cdots(z-\alpha^{N-2})=z^{\frac{N}{2}}+\sum_{i=0}^{\frac{N}{2}-1} w_i \cdot z^i$ where $w_i \in \mathbb{C}$. The following equality holds
\be
 \tilde{\bf A}_{\frac{N}{2}, \alpha^2}C_{\frac{N}{2}}=\breve{\bD}_{\frac{N}{2}}\tilde{\bf A}_{\frac{N}{2}, \alpha^2},
\label{deq}
\ee
where
\be
\bC_{\frac{N}{2}}=\begin{bmatrix}
0 & 0 & \cdots & 0 & -w_0\\
 1& 0 & \cdots & 0 &-w_1 \\
 0& 1 & \cdots & 0 &-w_2 \\
 \vdots& \vdots & \ddots & \vdots & \vdots\\
0 & 0 & \cdots & 1 & -w_{\frac{N}{2}-1}
\end{bmatrix}
\label{compM}
\ee
is the companion matrix of the polynomial $p(z)$ with coefficients $w_i (i=0, 1, \cdots, \frac{N}{2}-1)$ and $\breve{\bD}_{\frac{N}{2}}={\rm diag}(\alpha^{2k})_{k=0}^{\frac{N}{2}-1}$. By using (\ref{deq}), non-singularity of $\tilde{\bf A}_{\frac{N}{2}, \alpha^2}$, and induction on $N$, one can easily show
\be
\tilde{\bf A}_{\frac{N}{2}, \alpha^2}\bC_{\frac{N}{2}}^{\frac{N}{2}}=\breve{\bD}_{\frac{N}{2}}^{\frac{N}{2}}\tilde{\bf A}_{\frac{N}{2}, \alpha^2}
\label{Neq}
\ee
for any even number $N$.
Note that, $\breve{\bD}_{\frac{N}{2}}^{\frac{N}{2}}=\hat{\bD}_{\frac{N}{2}}$. Thus
\[
\bP_N\tilde{\bf A}_{N, \alpha}=\left[
\begin{array}{c|c}
 \tilde{\bf A}_{\frac{N}{2}, \alpha^2} &  \tilde{\bf A}_{\frac{N}{2}, \alpha^2}{\bC}_{\frac{N}{2}}^{\frac{N}{2}}\\
\hline\\
\tilde{\bf A}_{\frac{N}{2}, \alpha^2} \tilde{\bD}_{\frac{N}{2}} & \alpha^{\frac{N}{2}, \alpha^2} \tilde{\bf A}_{\frac{N}{2}, \alpha^2}{\bC}_{\frac{N}{2}}^{\frac{N}{2}} \tilde{\bD}_{\frac{N}{2}}
\end{array}
\right]
\]
and we get the result.
\end{proof}

\begin{corollary}
\label{coro:DVM}
Let the delay Vandermonde matrix ${\bf A}_{N, \alpha} = [\alpha^{kl}]_{k=1,l=0}^{N,N-1}$ be defined by nodes $\{\alpha, \alpha^2, \cdots, \alpha^N\}$ and $N=2^t(t \geq 1)$. Then the DVM can be factored into
\be
\begin{aligned}
{\bf A}_{N, \alpha}=\bP_N^T &
\begin{bmatrix}
 {\bf A}_{\frac{N}{2}, \alpha^2}& \\
& {\bf A}_{\frac{N}{2}, \alpha^2}
\end{bmatrix}
\begin{bmatrix}
 {\bar \bD}_{\frac{N}{2}}& \\
& {\bar \bD}_{\frac{N}{2}}
\end{bmatrix} \\
& \left[
\begin{array}{c|c}
 \bI_{\frac{N}{2}} &  \bC_{\frac{N}{2}}^{\frac{N}{2}}\\
\hline\\
\tilde{\bD}_{\frac{N}{2}} & \alpha^{\frac{N}{2}} \bC_{\frac{N}{2}}^{\frac{N}{2}} \tilde{\bD}_{\frac{N}{2}}
\end{array}
\right]{\bf D}_N
\end{aligned}
\label{eq:DVM}
\ee
where ${\bf A}_{\frac{N}{2}, \alpha^2}=[\alpha^{2kl}]_{k=1,l=0}^{\frac{N}{2},\frac{N}{2}-1}$ and ${\bar \bD}_{\frac{N}{2}}={\rm diag}\left[\frac{1}{\alpha^{2k}}\right]_{k=0}^{\frac{N}{2}-1}$.
\end{corollary}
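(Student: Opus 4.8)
The plan is to obtain (\ref{eq:DVM}) directly from Lemma~\ref{DVMLem} rather than by a fresh divide-and-conquer argument: the scaled factorization (\ref{eq:sDVM}) already carries all of the structural work, and what remains is only to convert the scaled blocks $\tilde{\bf A}_{\frac{N}{2},\alpha^2}$ into the unscaled DVM blocks ${\bf A}_{\frac{N}{2},\alpha^2}$ and to reinstate the diagonal scaling ${\bf D}_N$ at the top level, using the relation ${\bf A}_N=\tilde{\bf A}_N{\bf D}_N$ both at level $N$ and at level $N/2$ with base $\alpha^2$.

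First I would record the scaling identity one level down. Exactly as ${\bf A}_N=\tilde{\bf A}_N{\bf D}_N$, a single shift of the row index shows that ${\bf A}_{\frac{N}{2},\alpha^2}=[\alpha^{2kl}]_{k=1,l=0}^{\frac{N}{2},\frac{N}{2}-1}$ and $\tilde{\bf A}_{\frac{N}{2},\alpha^2}=[\alpha^{2kl}]_{k,l=0}^{\frac{N}{2}-1}$ are related by ${\bf A}_{\frac{N}{2},\alpha^2}=\tilde{\bf A}_{\frac{N}{2},\alpha^2}\,{\rm diag}[\alpha^{2k}]_{k=0}^{\frac{N}{2}-1}$. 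Since $\alpha=e^{-j\omega\tau}\neq 0$, the diagonal factor is invertible with inverse $\bar{\bD}_{\frac{N}{2}}={\rm diag}[\alpha^{-2k}]_{k=0}^{\frac{N}{2}-1}$, hence $\tilde{\bf A}_{\frac{N}{2},\alpha^2}={\bf A}_{\frac{N}{2},\alpha^2}\,\bar{\bD}_{\frac{N}{2}}$.

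Next I would substitute this into (\ref{eq:sDVM}). Using the block identity $\operatorname{diag}(XY,XY)=\operatorname{diag}(X,X)\operatorname{diag}(Y,Y)$ with $X={\bf A}_{\frac{N}{2},\alpha^2}$ and $Y=\bar{\bD}_{\frac{N}{2}}$, the middle block-diagonal factor of (\ref{eq:sDVM}) becomes $\operatorname{diag}({\bf A}_{\frac{N}{2},\alpha^2},{\bf A}_{\frac{N}{2},\alpha^2})\operatorname{diag}(\bar{\bD}_{\frac{N}{2}},\bar{\bD}_{\frac{N}{2}})$, while $\bP_N^T$ and the inner $2\times 2$ block matrix are left untouched. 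Finally, multiplying the resulting identity on the right by ${\bf D}_N$ and invoking ${\bf A}_{N,\alpha}=\tilde{\bf A}_{N,\alpha}{\bf D}_N$ on the left produces exactly (\ref{eq:DVM}).

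Since every step is a straightforward substitution, there is no genuine obstacle here; the only thing to watch is the bookkeeping — confirming that the scaled-to-unscaled conversion sits on the \emph{right} side of ${\bf A}_{\frac{N}{2},\alpha^2}$, that $\bar{\bD}_{\frac{N}{2}}$ carries the entries $\alpha^{-2k}$ rather than, say, $\alpha^{-k}$, and that pulling $\operatorname{diag}(\bar{\bD}_{\frac{N}{2}},\bar{\bD}_{\frac{N}{2}})$ out of the block-diagonal factor is legitimate. All the substantive content is inherited from Lemma~\ref{DVMLem}.
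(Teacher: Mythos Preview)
Your proposal is correct and follows exactly the route the paper takes: the paper's own proof is a one-liner stating that the result ``can easily be seen through the scaling of (\ref{eq:sDVM}) by ${\bf D}_N$ and ${\bar \bD}_{\frac{N}{2}}$,'' which is precisely the substitution you carry out in detail. Your bookkeeping on the side of the scaling and the exponents in $\bar{\bD}_{\frac{N}{2}}$ is right.
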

\begin{proof}
This can easily be seen through the scaling of \eqref{eq:sDVM} by ${\bf D}_N$ and ${\bar \bD}_{\frac{N}{2}}$.
\end{proof}
Note that in order to compute the companion matrix
$C_{\frac{N}{2}}$
we have to compute the coefficients of the polynomial
$p(z)=(z-1)(z-\alpha^2)(z-\alpha^4)\cdots(z-\alpha^{N-2})=z^{\frac{N}{2}}+\sum_{i=0}^{\frac{N}{2}-1} w_i \cdot z^i$.
One can do this by setting
$p_{\frac{N}{2}}^{(0)}(z)=1$
and
$p_{\frac{N}{2}}^{(k+1)}=(z-\alpha^{2k})p_{\frac{N}{2}}^{(k)}$
for $k=0, 1, \ldots, \frac{N}{2}-1$.
Then take $p_{\frac{N}{2}}^{(\frac{N}{2})}(z)$ which is $p(z)$. The following lemma gives this procedure.

\begin{lemma}
\label{compcoef}
Let $N$ be an even number, ${ W}=\{1, z^2, z^4, \cdots, z^{N-2}\}$, and $q(z)=\sum_{i=1}^{k} v_i \cdot z^{2i}$, where $k \leq \frac{N}{2}-2$. Then the coefficients of $z^2 \cdot q(z)= \sum_{i=1}^{k+1} w_i \cdot z^{2i}$ can be computed by
\be
\begin{bmatrix}
w_0\\
\vdots\\
w_{k+1}\\
0\\
\vdots\\
0
\end{bmatrix}= \left[\begin{array}{c|c}
{\bf Z} & {\bf O}_{\frac{N}{2}-1}\\
\hline
 \bfe_{\frac{N}{2}-1}& 0
\end{array}\right]\begin{bmatrix}
v_0\\
\vdots\\
v_{k}\\
0\\
\vdots\\
0
\end{bmatrix}
\label{cmt}
\ee
where ${\bf Z}=\begin{bmatrix}
0 & 0 &  \cdots&  0& 0\\
 1& 0 & 0 &  & \vdots\\
0 &  1&  0& \ddots & \vdots\\
 \vdots& \ddots & \ddots & \ddots & 0\\
0 & \cdots & 0 & 1 & 0
\end{bmatrix}$ is the lower shift matrix of size $\left(\frac{N}{2}-1\right) \times \left(\frac{N}{2}-1\right)$, $\bfe_{\frac{N}{2}-1}=\begin{bmatrix}
zeros\left(1,\frac{N}{2}-2\right)& 1
\end{bmatrix}$, ${\bf O}_{\frac{N}{2}-1}=
zeros\left(\frac{N}{2}-1, 1\right)$
\end{lemma}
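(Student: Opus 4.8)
The plan is to recognize Lemma~\ref{compcoef} as the elementary statement that, with coefficient vectors indexed by the powers $1,z^2,z^4,\dots$, multiplying a polynomial by $z^2$ amounts to a single downward shift of its coefficient vector, and that the block matrix displayed in~\eqref{cmt} is exactly the $\frac{N}{2}\times\frac{N}{2}$ lower shift matrix, written with its last row and last column split off. The proof is therefore a short, direct verification rather than anything structural.

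First I would pin down the indexing. Reading $q(z)=\sum_i v_i\,z^{2i}$ with coefficient vector $(v_0,v_1,\dots,v_k,0,\dots,0)^{T}$ of length $\frac{N}{2}$ (any absent low-order terms merely appearing as leading zeros), a one-line computation
\[
z^2 q(z)=\sum_i v_i\, z^{2(i+1)}=\sum_j v_{j-1}\, z^{2j}
\]
shows that the coefficients of $z^2 q(z)=\sum_j w_j z^{2j}$ satisfy $w_0=0$ and $w_j=v_{j-1}$ for $j\ge 1$. That is the entire content at the level of polynomials.

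Next I would translate this into the claimed matrix identity. The relations $w_0=0$ and $w_j=v_{j-1}$ say precisely that the $w$-vector is the $v$-vector shifted down by one entry, i.e.\ $\mathbf{w}=\mathbf{S}\mathbf{v}$ where $\mathbf{S}$ is the $\frac{N}{2}\times\frac{N}{2}$ lower shift matrix ($\mathbf{S}_{i,i-1}=1$ and all other entries $0$). I would then observe that the block matrix in~\eqref{cmt} coincides with $\mathbf{S}$: its top-left $(\frac{N}{2}-1)\times(\frac{N}{2}-1)$ block ${\bf Z}$ supplies the entries $\mathbf{S}_{i,i-1}=1$ for $i=2,\dots,\frac{N}{2}-1$; the top-right column ${\bf O}_{\frac{N}{2}-1}$ is zero; and the bottom row $\begin{bmatrix}\bfe_{\frac{N}{2}-1}&0\end{bmatrix}$ furnishes the remaining entry $\mathbf{S}_{\frac{N}{2},\frac{N}{2}-1}=1$. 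Carrying out $\mathbf{S}\,(v_0,\dots,v_k,0,\dots,0)^{T}$ entrywise then yields $(0,v_0,\dots,v_k,0,\dots,0)^{T}=(w_0,\dots,w_{k+1},0,\dots,0)^{T}$, which is exactly~\eqref{cmt}.

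The only point requiring any attention — essentially the sole ``obstacle'' — is checking that the hypothesis $k\le\frac{N}{2}-2$ is what keeps the shift from pushing a nonzero coefficient out of the vector: since then $k+1\le\frac{N}{2}-1<\frac{N}{2}$, the entry $w_{k+1}=v_k$ still lies inside a length-$\frac{N}{2}$ vector, so the finite-dimensional identity~\eqref{cmt} records the polynomial multiplication faithfully, with no truncation or wrap-around. Everything else is routine index bookkeeping, and this shift is the elementary building block one composes with diagonal scalings to assemble incrementally the coefficients $w_i$ of the polynomial defining the companion matrix $\mathbf{C}_{\frac{N}{2}}$.
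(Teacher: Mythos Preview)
Your proof is correct and amounts to the same elementary observation as the paper's: multiplication by $z^2$ shifts the coefficient vector down by one, and the displayed block matrix is precisely the $\frac{N}{2}\times\frac{N}{2}$ lower shift. The paper phrases this dually, verifying the identity $z^2\,[1,z^2,\dots,z^{N-2}]-[1,z^2,\dots,z^{N-2}]\,M=[0,\dots,0,z^N]$ on the row of basis monomials and then pairing with the coefficient column (the overflow term $z^N$ being annihilated exactly by the hypothesis $k\le\frac{N}{2}-2$), whereas you work directly with the coefficients; the content is the same.
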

\begin{proof}
It is obvious that polynomials in ${ W}$ satisfy the recurrence relation $z^{k}=z^2 \cdot z^{k-1}$ for $k=1, 2, \cdots, \frac{N}{2}-1$ with $z^0=1$. By matrix multiplication we can easily get:
\be
\begin{aligned}
z^2 & \begin{bmatrix}
1 & z^2 & z^4 & \cdots &z^{N-2}
\end{bmatrix}  \\
& -\begin{bmatrix}
1 & z^2 & z^4 & \cdots &z^{N-2}
\end{bmatrix}\left[\begin{array}{c|c}
{\bf Z} & {\bf O}_{\frac{N}{2}-1}\\
\hline
 \bfe_{\frac{N}{2}-1}& 0
\end{array}\right] \\
&
=\begin{bmatrix} 0 &\cdots & 0 & z^N\end{bmatrix}
\end{aligned}
\label{cm1}
\ee
Multiplying (\ref{cm1}) by the column of the coefficients we get the result.
\end{proof}
Lemma~\ref{compcoef} can be used to compute the coefficients of the polynomial $p(z)=(z-1)(z-\alpha^2)(z-\alpha^4)\cdots(z-\alpha^{N-2})=z^{\frac{N}{2}}+\sum_{i=0}^{\frac{N}{2}-1} w_i \cdot z^i$ efficiently. Hence the companion matrix $\bC_{\frac{N}{2}}$ can be computed efficiently using the Lemma~\ref{compcoef}.

To compute the self-contained DVM factorization,
first we calculate
the powers of the companion matrix.
We will use the following result for the calculation of
$\bC_{\frac{N}{2}}^{\frac{N}{2}}$.

\begin{corollary}
Let $N=2^{t}(t\geq 2)$, $m=2^k(k \geq 2)$, and $\bC_{\frac{N}{2}}=\begin{bmatrix}
0 & 0 & \cdots & 0 & -w_0\\
 1& 0 & \cdots & 0 &-w_1 \\
 0& 1 & \cdots & 0 &-w_2 \\
 \vdots& \vdots & \ddots & \vdots & \vdots\\
0 & 0 & \cdots & 1 & -w_{\frac{N}{2}-1} \end{bmatrix}$. Then $\bC_{\frac{N}{2}}^{\frac{N}{2}}$ can be computed via
\be
\bC_{\frac{N}{2}}^{m}=\bC_{\frac{N}{2}}^{\frac{m}{2}}\cdot \bC_{\frac{N}{2}}^{\frac{m}{2}},
\label{dccom}
\ee
for $2 \leq m \leq \frac{N}{2}$, where $w_i$ for $i=0, 1, \cdots, \frac{N}{2}-1$ are computed as in Lemma \ref{compcoef}.
\label{poweCom}
\end{corollary}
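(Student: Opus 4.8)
The plan is to observe that \eqref{dccom} is nothing more than associativity of matrix multiplication specialized to even exponents, combined with the fact that $\frac{N}{2}$ is itself a power of two, so that the repeated-squaring recursion terminates exactly at the desired power. First I would recall that for any square matrix $\bC$ and nonnegative integers $a,b$ one has $\bC^{a+b}=\bC^{a}\bC^{b}$, immediate from associativity of the product; taking $a=b=\frac{m}{2}$ for an even integer $m$ yields $\bC_{\frac{N}{2}}^{m}=\bC_{\frac{N}{2}}^{\frac{m}{2}}\cdot\bC_{\frac{N}{2}}^{\frac{m}{2}}$, which is exactly \eqref{dccom}.

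Next I would check that the recursion closes on itself. Since $N=2^{t}$ with $t\geq 2$, we have $\frac{N}{2}=2^{t-1}$, so the exponents $2,4,8,\ldots,\frac{N}{2}$ are all powers of two lying in the stated range $2\leq m\leq \frac{N}{2}$, and halving any one of them again produces a power of two in the same range, down to the base case $m=2$ where $\bC_{\frac{N}{2}}^{1}=\bC_{\frac{N}{2}}$ is the given companion matrix whose entries $w_0,\ldots,w_{\frac{N}{2}-1}$ are supplied by Lemma~\ref{compcoef}. Spelling out the induction: starting from $\bC_{\frac{N}{2}}$, apply \eqref{dccom} with $m=2$ to obtain $\bC_{\frac{N}{2}}^{2}$, then with $m=4$ to obtain $\bC_{\frac{N}{2}}^{4}$, and so on, doubling the exponent at each step; after exactly $t-1=\log_2\frac{N}{2}$ squarings one reaches $\bC_{\frac{N}{2}}^{\frac{N}{2}}$, as claimed.

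There is no genuine obstacle here; the identity is elementary. The only points worth stating carefully are that the chain of exponents consists solely of powers of two so the halving scheme is well defined, and that the base object $\bC_{\frac{N}{2}}$ is available explicitly from Lemma~\ref{compcoef}. The substance of the corollary is really the efficiency observation that this scheme replaces the $\frac{N}{2}-1$ naive matrix products needed for $\bC_{\frac{N}{2}}^{\frac{N}{2}}$ by only $\log_2\frac{N}{2}$ of them, which is what keeps the overall DVM factorization cheap to assemble.
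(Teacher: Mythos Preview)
Your proposal is correct and matches the paper's approach: the paper's own proof is the single line ``One can easily use induction for $k \geq 2$ to show \eqref{dccom},'' which is precisely the associativity/repeated-squaring argument you spell out. If anything, your write-up is more explicit than the paper's, but the underlying idea is identical.
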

\begin{proof}
One can easily use induction for $k \geq 2$ to show (\ref{dccom}).
\end{proof}

\begin{remark}
\label{rill}
Although the factorization for the DVM can be stated as in Corollary \ref{coro:DVM}, we should recall here that the classical Vandermonde matrix $V$ is extremely ill-conditioned and in fact the condition number of the matrix $V$ grows exponentially with the size \cite{P16, T94, GI88}. In this paper, we will study how bad the complex structured DVM can be in terms of the choices for nodes in Section \ref{sec:bound}.
\end{remark}

We will first state the following algorithm based on Lemma \ref{compcoef} to compute the coefficients of the polynomial
\be
\begin{aligned}
p(z) &=(z-1)(z-\alpha^2)(z-\alpha^4)\cdots(z-\alpha^{N-2})\\
&=z^{\frac{N}{2}}+\sum_{i=0}^{\frac{N}{2}-1} w_i \cdot z^i.
\end{aligned}
\label{MP}
\ee
Later, the coefficients of $p(z)$ will be used to construct the companion matrix $\bC_{\frac{N}{2}}$ defined in (\ref{compM}).

\begin{algorithm} $({\bf com}(N, \alpha))$\\
\label{algo:compn}
Input:  Even $N$, and ${\bf \alpha} \in \mathbb{C}$
\begin{enumerate}
\item Set $\begin{bmatrix} w_0^{(0)} & w_1^{(0)} &  \cdots & w_{\frac{N}{2}-1}^{(0)}\end{bmatrix}=\begin{bmatrix} 1 & 0 &  \cdots & 0  \end{bmatrix}$
\item For $k=1:N_1-1$,
\[
\begin{aligned}
\begin{bmatrix}
w_0^{(k)}\\
w_1^{(k)}\\
\vdots\\
w_{N_1-1}^{(k)}
\end{bmatrix} & =\left( \left[\begin{array}{c|c}
{\bf Z} & {\bf O}_{N_1-1}\\
\hline
 \bfe_{N_1-1}& 0
\end{array}\right] - \alpha^{2(k-1)} \cdot I \right)
\\
&
\begin{bmatrix}
w_0^{(k-1)}\\
w_1^{(k-1)}\\
\vdots\\
w_{N_1-1}^{(k-1)}
\end{bmatrix}
\end{aligned}
\]
\item Take $\begin{bmatrix} w_0 & w_1 &  \cdots & w_{N_1-1}  \end{bmatrix}=\begin{bmatrix} w_0^{(N_1-1)} & w_1^{(N_1-1)} &  \cdots & w_{N_1-1}^{(N_1-1)}\end{bmatrix}$
\end{enumerate}
Output: Coefficients of $p(z)$ (except the leading coefficient as $p(z)$ is monic) i.e. $\{w_0, w_1, w_2, \cdots, w_{N_1-1}\}$ satisfying \ref{MP}.
\end{algorithm}

We will use the output of algorithm~\ref{algo:compn}
(i.e. ${\bf com}(N, \alpha)$) to construct the companion matrix $\bC_{N_1}$ (\ref{compM}).
Following the self-contained DVM factorization (\ref{eq:DVM}),
one has to compute the powers of
the companion matrix $\bC_{N_1}$ (\ref{compM}).
Corollary~\ref{poweCom}
suggests the following algorithm to compute
$\bC^m_{N_1}$ for $2 \leq m \leq N_1$,
where $m=2^{t_1}(t_1\geq 1)$.

\begin{algorithm} $({\bf comp}(N, \alpha))$\\
\label{algo:comwr}
Input: $N=2^t (\geq 1)$, $N_1=\frac{N}{2}$, and $\alpha \in \mathbb{C}$
\begin{enumerate}
\item Set ${\bf w}=\begin{bmatrix} w_0 & w_1 &  \cdots & w_{N_1-1}\end{bmatrix}$
and $\bC_{N_1}=\left[\begin{array}{c|c}
\begin{bmatrix}
{\bf Z}\\
\bfe_{N_1-1}
\end{bmatrix} & -{\bf w}\\
\end{array}\right]$
\item for $m=2:N_1$\\
 $\bC_{N_1}^m=\bC_{N_1}^{\frac{m}{2}}\bC_{N_1}^{\frac{m}{2}}$\\
 end
\end{enumerate}
Output: $\bC_{N_1}^{N_1}$.
\end{algorithm}

We will use the output of the Algorithm~\ref{algo:comwr} (i.e. ${\bf comp}(N, \alpha)$) to construct $\tilde{\bf C}_N$ s.t. $\tilde{\bf C}_N= \left[
\begin{array}{c|c}
 \bI_{N_1} &  \bC_{N_1}^{N_1}\\
\hline\\
\tilde{\bD}_{N_1} & \alpha^{N_1} \bC_{N_1}^{N_1} \tilde{\bD}_{N_1}
\end{array}
\right]$ for all $N \geq 4$.
The self-contained factorization for the scaled DVM i.e. Lemma \ref{DVMLem} together with algorithms \ref{algo:compn} and \ref{algo:comwr} lead us to establish a recursive radix-2 scaled DVM algorithm to compute $\tilde{\bf A}_{N, \alpha}=[\alpha^{kl}]_{k,l=0}^{N-1}$ as stated next.
\begin{algorithm} $({\bf sdvm}(N, \alpha, {\bf z}))$\\
\label{algo:sdvm}
Input: $N = 2^t (t \geq 1)$, $N_1=\frac{N}{2}$, $\alpha \in \mathbb{C}$, and ${\bf z} \in \mathbb{R}^n {\:\:\rm or\:\:} \mathbb{C}^n$.
\begin{enumerate}
\item Set $\tilde{\bf C}_{N}$.
\item If $N=2$, then \\
\hspace{.1in} ${\bf y}=\begin{bmatrix}
1 & 1\\
1 & \alpha
\end{bmatrix} {\bf z}.$
\item If $N \geq 4$, then \\
 \hspace{.1in} ${\bf u}:=\tilde{\bf C}_N {\bf z}$,\\
\hspace{.1in} ${\bf v1}:={\bf sdvm} \left(N_1, \alpha^2, \left[u_i \right]_{i=0}^{N_1-1} \right)$,\\
 \hspace{.1in} ${\bf v2}:={\bf sdvm} \left(N_1,  \alpha^2, \left[u_i \right]_{i=N_1}^{N}\right)$,\\
 \hspace{.1in} ${\bf y}:={\bf P}_N^T \left({\bf v1}^T, {\bf v2}^T \right)^T$.
\end{enumerate}
Output: ${\bf y}=\tilde{\bf A}_{N, \alpha}{\bf z}$.
\end{algorithm}

\begin{remark}
\label{fcol}
Recall that the delay Vandermonde matrix (i.e. ${\bf A}_{N, \alpha}$) and scaled delay Vandermonde matrix (i.e. $\tilde{\bf A}_{N, \alpha}$) is related via ${\bf A}_{N, \alpha}=\tilde{\bf A}_{N, \alpha}\cdot {\bf D}_N$,
where
${\bf D}_N=\operatorname{diag}(\alpha^k)_{k=0}^{N-1}$.
Thus, once the algorithm~\ref{algo:sdvm} is executed,
we can scale the output of the algorithm by ${\bf D}_N$ to obtain a DVM algorithm.
Although the computational cost of the DVM algorithm reduces in this fashion, the resulting DVM algorithm won't be self-recursive.
\end{remark}

In the following we will state a recursive radix-2 DVM algorithm with the help of Corollary~\ref{coro:DVM}, algorithms~\ref{algo:compn}, and \ref{algo:comwr}.
For notation convenience, we define
$\bar{\bar{\bf \bD}}_N=\begin{bmatrix}
{\bar \bD}_{N_1}& \\
& {\bar \bD}_{N_1}
\end{bmatrix}$ for all $N \geq 4$.

\begin{algorithm} $({\bf dvm}(N, \alpha, {\bf z}))$\\
\label{algo:dvm}
Input: $N = 2^t (t \geq 1)$, $N_1=\frac{N}{2}$, $\alpha \in \mathbb{C}$, and ${\bf z} \in \mathbb{R}^n {\:\:\rm or\:\:} \mathbb{C}^n$.
\begin{enumerate}
\item Set ${\bf D}_N$, $\tilde{\bf C}_{N}$, and $\bar{\bar{\bD}}_N$.
\item If $N=2$, then \\
\hspace{.1in} ${\bf y}=\begin{bmatrix}
1 & \alpha\\
1 & \alpha^2
\end{bmatrix} {\bf z}.$
\item If $N \geq 4$, then \\
 \hspace{.1in} ${\bf u}:={\bf D}_N {\bf z}$,\\
 \hspace{.1in} ${\bf v}:=\tilde{\bf C}_N {\bf u}$,\\
\hspace{.1in} ${\bf r}:=\bar{\bar{\bf D}}_N {\bf v}$,\\
\hspace{.1in} ${\bf s1}:={\bf dvm} \left(N_1, \alpha^2, \left[r_i \right]_{i=0}^{N_1-1}  \right)$,\\
 \hspace{.1in} ${\bf s2}:={\bf dvm} \left(N_1,  \alpha^2, \left[r_i \right]_{i=N_1}^{N} \right)$,\\
 \hspace{.1in} ${\bf y}:={\bf P}_N^T \left({\bf s1}^T, {\bf s2}^T \right)^T$.
\end{enumerate}
Output: ${\bf y}={\bf A}_{N, \alpha}{\bf z}$.
\end{algorithm}

\section{Complexity of DVM Algorithms}
\label{sec:com}
The number of additions and multiplications required to carry out a computation is called the arithmetic complexity.
In this section the arithmetic complexities of the proposed self-recursive scaled DVM and DVM algorithms are established.
\subsection{Arithmetic Complexity of DVM Algorithms}
\label{sub:comtheo}
Here we analyze the arithmetic complexity of the self-recursive scaled DVM and DVM algorithms presented in Section \ref{sec:fac}.
Let
$\#a$
and
$\#m$
denote
the number of complex additions
and
complex multiplications,
respectively,
required to compute
${\bf y}=\tilde{\bf A}_{N, \alpha}{\bf z}$
or
${\bf y}={\bf A}_{N, \alpha}{\bf z}$
for scaled DVM and DVM.
Note that we do not count multiplication by $\pm 1$, $\pm \sqrt{-1}$,
and permutation.
\begin{lemma}
\label{lemma:sdvm}
Let $N=2^t (t \geq 2)$ be given. The arithmetic complexity on computing the scaled DVM algorithm \ref{algo:sdvm} is given by
\begin{align}
\#a(sDVM, N) &= \frac{1}{2}\left(Nt+4^t-N\right),
\nonumber \\
\#m(sDVM, N) &= \frac{3}{2}Nt+\frac{1}{2}4^t-2N.
\label{amdvm1}
\end{align}
\end{lemma}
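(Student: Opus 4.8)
The plan is to read a pair of coupled recurrences off Algorithm~\ref{algo:sdvm} and then solve them in closed form. Fix $N=2^t$ with $t\geq2$ and set $N_1=N/2$. One invocation of $\mathbf{sdvm}(N,\alpha,\cdot)$ carries out exactly: (i)~the construction of $\tilde{\bC}_N$, i.e.\ of the coefficients $w_i$ (Algorithm~\ref{algo:compn}) and of the companion power $\bC_{N_1}^{N_1}$ by repeated squaring (Algorithm~\ref{algo:comwr}); (ii)~the single matrix--vector product $\mathbf{u}=\tilde{\bC}_N\mathbf{z}$; (iii)~two self-recursive calls $\mathbf{sdvm}(N_1,\alpha^2,\cdot)$; and (iv)~the permutation $\bP_N^T$, which uses no arithmetic. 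Writing $a(N)=\#a(sDVM,N)$ and $m(N)=\#m(sDVM,N)$, this gives
\[
a(N)=2\,a(N/2)+A(N),\qquad m(N)=2\,m(N/2)+M(N),
\]
where $A(N),M(N)$ are the additions and multiplications spent on steps (i)--(ii); the base case $a(2)=2$, $m(2)=1$ comes from inspecting the $2\times2$ block with rows $(1,1)$ and $(1,\alpha)$ (recalling that multiplication by $1$ is not counted).

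The second step is to evaluate $A(N)$ and $M(N)$. The dominant contribution is the application of the $N_1\times N_1$ companion power $\bC_{N_1}^{N_1}$ appearing in $\tilde{\bC}_N$: forming $\mathbf{q}=\bC_{N_1}^{N_1}\mathbf{z}_2$ as a dense matrix--vector product costs $N_1^2$ multiplications and $N_1(N_1-1)$ additions, and $\mathbf{q}$ is then reused in both block rows of $\tilde{\bC}_N\mathbf{z}$. Everything else in steps (i)--(ii) is $O(N)$: the diagonal scalings by $\tilde{\bD}_{N_1}$ and by $\alpha^{N_1}$, the two block-row additions, and the set-up of the $w_i$'s and of $\bC_{N_1}^{N_1}$ (which exploits the shift-plus-one-column sparsity of $\bC_{N_1}$ and of its low powers). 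A careful count yields $A(N)=\tfrac{N^2}{4}+\tfrac{N}{2}$ and $M(N)=\tfrac{N^2}{4}+\tfrac{3N}{2}$.

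The last step is mechanical: unrolling the two recurrences over the $t$ levels gives $a(2^t)=2^{t-1}a(2)+\sum_{j=2}^{t}2^{t-j}A(2^j)$, and similarly for $m$. Each right-hand side collapses to two geometric sums, $\sum_{j=2}^{t}2^{t-j}4^{j}=2^{t}\sum_{j=2}^{t}2^{j}$ (producing the $4^t$ term) and $\sum_{j=2}^{t}2^{t-j}2^{j}=(t-1)2^{t}$ (producing the $Nt$ term). Substituting the base cases and collecting terms reproduces \eqref{amdvm1}.

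I expect the real work to lie entirely in the second step: the exact, non-asymptotic count of $A(N)$ and $M(N)$. One must verify that applying $\tilde{\bC}_N$ uses only one $N_1\times N_1$ dense product (so that the coefficient of $4^t$ comes out $\tfrac12$ rather than $1$), keep precise track of which multiplications are by $\pm1$ and hence uncounted, identify the linear terms correctly (the entire difference between $A(N)$ and $M(N)$ is carried by those $O(N)$ terms), and check that the small cases $N=2,4$ agree with the closed form so that \eqref{amdvm1} is valid for every $t\geq2$.
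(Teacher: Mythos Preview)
Your overall plan is the paper's own: write the recurrence $a(N)=2\,a(N/2)+\#a(\tilde{\bC}_N)$ (and the analogous one for multiplications), pin down $\#a(\tilde{\bC}_N)=\tfrac{N^2}{4}+\tfrac{N}{2}$, $\#m(\tilde{\bC}_N)=\tfrac{N^2}{4}+\tfrac{3N}{2}$, and solve with base values $a(2)=2$, $m(2)=1$. The paper presents these as first-order linear difference equations in $t$ and solves them the same way; your unrolled sums are equivalent.

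Where you diverge from the paper---and where your argument has a gap---is the derivation of $A(N)$ and $M(N)$. Your claim that $\mathbf{q}=\bC_{N_1}^{N_1}\mathbf{z}_2$ can be ``reused in both block rows'' does not hold: the lower-right block of $\tilde{\bC}_N$ is $\alpha^{N_1}\bC_{N_1}^{N_1}\tilde{\bD}_{N_1}$, so the bottom half needs $\bC_{N_1}^{N_1}(\tilde{\bD}_{N_1}\mathbf{z}_2)$, not $\bC_{N_1}^{N_1}\mathbf{z}_2$, and there is no diagonal similarity letting you swap the order. Treating $\bC_{N_1}^{N_1}$ as a dense matrix therefore forces \emph{two} $N_1\times N_1$ products and a leading term $\tfrac{N^2}{2}$, not $\tfrac{N^2}{4}$. (Relatedly, constructing $\bC_{N_1}^{N_1}$ cannot be $O(N)$---just writing its entries is $\Theta(N_1^2)$---so that step does not belong with the linear terms.) The paper gets the $\tfrac{N^2}{4}$ coefficient by a different route: it costs the companion power through the repeated-squaring Algorithm~\ref{algo:comwr}, solving a homogeneous recurrence to obtain $\#a(\bC^m)=\tfrac{m^2}{2}-\tfrac{m}{2}$ and $\#m(\bC^m)=\tfrac{m^2}{2}$ with $m=N/2$, and then combines two such contributions with the diagonal $\tilde{\bD}_{N_1}$, the $\alpha^{N_1}$ scaling, and the two block-row additions to reach the stated $A(N)$ and $M(N)$. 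So your recurrence and its solution match the paper, but your justification of the per-level cost needs to be redone along the paper's divide-and-conquer accounting for $\bC_{N_1}^{N_1}$ rather than the reuse argument.
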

\begin{proof}
Referring to the ${\bf sdvm}(N, \alpha, {\bf z})$ algorithm, we get
\be
\#a(\textrm{sDVM}, N)  = 2\cdot \#a\left(\textrm{sDVM}, \frac{N}{2} \right) + \#a\left(\tilde{\bf C}_N \right).
\label{aim}
\ee
The matrix $\tilde{\bf C}$ is constructed using $\tilde{\bD}$ and $\bC^{\frac{N}{2}}$. Moreover, to compute the powers of the Companion matrix $\bC^{\frac{N}{2}}$ we have used the divide-and-conquer technique via algorithm ${\bf comp}(N, \alpha)$. Since $m=2^t$ in algorithm ${\bf comp}(N, \alpha)$, by solving a homogeneous first order linear difference equation with respect to $t(t \geq 1)$ (i.e. for $m=2^t (t \geq 1)$ solving $\#a/\#m\left(\bC^m, 2^t \right)-2 \cdot \#a/\#m \left(\bC^m, 2^{t-1}\right)=0$ with initial condition $\#a\left(\bC^m , 2\right)=m-1$ or $\#m\left(\bC^m , 2\right)=m$ respectively), we could obtain $\#a\left(\bC^m \right)=\frac{m^2}{2}-\frac{m}{2}$ and $\#m\left(\bC^{m} \right)=\frac{m^2}{2}$. This fact together with the construction of $\tilde{\bf C}$ using $\tilde{\bD}$ and $\bC^{\frac{N}{2}}$, and $m=\frac{N}{2}$ gives us:
\be
\begin{matrix}
\#a\left(\tilde{\bf C}_N \right)= \frac{N^2}{4}+\frac{N}{2}, &\#m\left(\tilde{\bf C}_N \right) = \frac{N^2}{4}+\frac{3N}{2}
\end{matrix}
\label{Ctn}
\ee
Using the above result we can write (\ref{aim}) as
\[
\#a(\textrm{sDVM}, N)  = 2\cdot \#a\left(\textrm{sDVM}, \frac{N}{2} \right) + \frac{N^2}{4}+\frac{N}{2}
\]
Since $N=2^t$, the above simplifies to the first order difference equation with respect to $t \geq 2$
\[
\#a(\textrm{sDVM}, 2^t)  - 2\cdot \#a\left(\textrm{sDVM}, 2^{t-1}\right) = 4^{t-1}+2^{t-1}.
\]
Solving the above difference equation using the initial condition $\#a(\textrm{sDVM}, 2)=2$, we can obtain
\[
\#a(\textrm{sDVM}, 2^t) =\frac{1}{2}Nt+\frac{1}{2}4^t-\frac{1}{2}N.
\]
Referring the scaled DVM algorithm \ref{algo:sdvm} and  (\ref{Ctn}), we could obtain another first order difference equation with respect to $t \geq 2$
\[
\#m(\textrm{sDVM}, 2^t)  - 2\cdot \#m\left(\textrm{sDVM}, 2^{t-1}\right) = 4^{t-1}+3 \cdot 2^{t-1}.
\]
Solving the above difference equation using the initial condition $\#m(\textrm{sDVM}, 2)=1$, we can obtain
\[
\#m(\textrm{sDVM}, 2^t) =\frac{3}{2}Nt+\frac{1}{2}4^t-2N.
\]
\end{proof}
\begin{lemma}
\label{lemma:dvm}
Let $N=2^t (\geq 2)$, The arithmetic complexity on computing the DVM algorithm \ref{algo:dvm} is given by
\begin{align}
\#a(DVM, N) &=\frac{1}{2}\left(Nt+ 4^t- N\right),
\nonumber \\
\#m(DVM, N) &= \frac{7}{2}Nt+\frac{1}{2}4^t-\frac{7}{2}N.
\label{amdvm2}
\end{align}
\end{lemma}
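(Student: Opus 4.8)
The plan is to mirror the structure of the proof of Lemma~\ref{lemma:sdvm}, since the DVM algorithm \ref{algo:dvm} has exactly the same recursive skeleton as the scaled DVM algorithm \ref{algo:sdvm}, differing only by the extra diagonal scalings ${\bf D}_N$ and $\bar{\bar{\bf D}}_N$. First I would set up the additive recurrence: referring to ${\bf dvm}(N,\alpha,{\bf z})$, the step ${\bf u}:={\bf D}_N{\bf z}$ costs no additions, the step ${\bf v}:=\tilde{\bf C}_N{\bf u}$ costs $\#a(\tilde{\bf C}_N)$ additions exactly as in the scaled case, the step ${\bf r}:=\bar{\bar{\bf D}}_N{\bf v}$ costs no additions, and the two recursive calls cost $2\cdot\#a(\mathrm{DVM},N/2)$. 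Hence $\#a(\mathrm{DVM},N)=2\cdot\#a(\mathrm{DVM},N/2)+\#a(\tilde{\bf C}_N)$, which is \emph{identical} to \eqref{aim}; with the same initial condition $\#a(\mathrm{DVM},2)=2$ (the $2\times 2$ matrix has one addition per output row, two rows) the addition count must coincide with the scaled case, giving $\#a(\mathrm{DVM},N)=\tfrac12(Nt+4^t-N)$. I would state this and refer back rather than re-solving the difference equation.

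Next I would handle the multiplication count, which is where the scalings matter. The new ingredients relative to the scaled algorithm are: ${\bf u}:={\bf D}_N{\bf z}$, which is multiplication by $\operatorname{diag}(\alpha^k)_{k=0}^{N-1}$ and costs $N-1$ multiplications (the $k=0$ entry is $1$, which is not counted); and ${\bf r}:=\bar{\bar{\bf D}}_N{\bf v}$, which is multiplication by a block-diagonal matrix with two copies of $\bar{\bf D}_{N/2}=\operatorname{diag}[1/\alpha^{2k}]_{k=0}^{N/2-1}$, costing $2\cdot(N/2-1)=N-2$ multiplications. The cost of ${\bf v}:=\tilde{\bf C}_N{\bf u}$ is again $\#m(\tilde{\bf C}_N)=\tfrac{N^2}{4}+\tfrac{3N}{2}$ from \eqref{Ctn}. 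So the recurrence becomes
\be
\#m(\mathrm{DVM},N)=2\cdot\#m\!\left(\mathrm{DVM},\tfrac N2\right)+\frac{N^2}{4}+\frac{3N}{2}+(N-1)+(N-2).
\label{eq:mrec}
\ee
With $N=2^t$ this is the first-order difference equation
\be
\#m(\mathrm{DVM},2^t)-2\cdot\#m(\mathrm{DVM},2^{t-1})=4^{t-1}+\frac{7}{2}\cdot 2^t-3,
\ee
to be solved with initial condition $\#m(\mathrm{DVM},2)=3$ read off from the $2\times 2$ base case $\begin{bmatrix}1&\alpha\\1&\alpha^2\end{bmatrix}$ (entries $\alpha,\alpha^2$ cost one multiplication each to form and each is used once, plus forming $\alpha^2$, i.e. three nontrivial multiplications total).

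The remaining work is purely mechanical: solve the inhomogeneous linear recurrence. The homogeneous solution is $c\cdot 2^t$; for the $4^{t-1}=\tfrac14 4^t$ forcing term try a particular solution $\propto 4^t$, for the $\tfrac72\cdot 2^t$ term (resonant with the homogeneous part) try $\propto t\cdot 2^t$, and for the constant $-3$ try a constant. Summing the particular solutions, fixing $c$ from $\#m(\mathrm{DVM},2)=3$, and rewriting $2^t=N$, $4^t=N^2$ should yield $\#m(\mathrm{DVM},N)=\tfrac72 Nt+\tfrac12 4^t-\tfrac72 N$, matching \eqref{amdvm2}. The only genuine obstacle I anticipate is bookkeeping: getting the per-step multiplication counts of the two diagonal factors right (in particular correctly discarding the trivial unit entries so that ${\bf D}_N$ contributes $N-1$, not $N$, and $\bar{\bar{\bf D}}_N$ contributes $N-2$, not $N$), and correctly pinning down the $N=2$ base value — a slip in either of these shifts the linear-in-$t$ coefficient or the $-\tfrac72 N$ term. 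Everything downstream is a routine difference-equation solve of exactly the type already carried out in Lemma~\ref{lemma:sdvm}.
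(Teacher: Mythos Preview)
Your additive argument is correct and is exactly the paper's: the two diagonal factors contribute no additions, so the recurrence and the initial value $\#a(\mathrm{DVM},2)=2$ coincide with the scaled case and give $\tfrac12(Nt+4^t-N)$.

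The multiplicative part, however, does not land on the stated formula. In the paper's count (equation~(\ref{DDb})) one takes $\#m({\bf D}_N)=N$ and $\#m(\bar{\bar{\bf D}}_N)=N$, together with the base value $\#m(\mathrm{DVM},2)=2$, which yields
\[
\#m(\mathrm{DVM},2^t)-2\,\#m(\mathrm{DVM},2^{t-1})=4^{t-1}+7\cdot 2^{t-1},
\]
and this solves to $\tfrac72 Nt+\tfrac12 4^t-\tfrac72 N$. Your choices $N-1$, $N-2$ and base value $3$ give instead
\[
\#m(\mathrm{DVM},2^t)-2\,\#m(\mathrm{DVM},2^{t-1})=4^{t-1}+7\cdot 2^{t-1}-3,
\]
whose solution with $\#m(\mathrm{DVM},2)=3$ is $\tfrac72 Nt+\tfrac12 4^t-\tfrac92 N+3$; e.g.\ at $N=4$ you get $21$, not the $22$ in Table~\ref{tbl:DVMcost}. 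So the bookkeeping risk you yourself flag does bite: discounting the unit diagonal entries may be more faithful to the ``do not count $\pm1$'' convention, but it is \emph{not} the convention under which (\ref{amdvm2}) holds. To prove the lemma as stated you must take $\#m({\bf D}_N)=\#m(\bar{\bar{\bf D}}_N)=N$ and $\#m(\mathrm{DVM},2)=2$; with those inputs your difference-equation solve goes through verbatim and matches the paper.
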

\begin{proof}
Referring to the ${\bf dvm}(N, \alpha, {\bf z})$ algorithm, we get
\be
\begin{aligned}
\#a(\textrm{DVM}, N)  & = 2\cdot \#a\left(\textrm{DVM}, \frac{N}{2} \right) +  \#a\left({\bf D}_N \right)+
\\ & \hspace{0.2in}
 \#a\left(\tilde{\bf C}_N \right)+  \#a\left(\bar{\bar{\bf D}}_N \right)
\end{aligned}
\label{daim}
\ee
By following the structures of ${\bf D}_N$ and $\bar{\bar{\bf D}}_N$ we get
\be
\begin{matrix}
\#a\left({\bf D}_N \right)= 0, &\#m\left({\bf D}_N \right) = N\\
\#a\left(\bar{\bar{\bf D}}_N \right)= 0, &\#m\left(\bar{\bar{\bf D}}_N \right) = N
\end{matrix}
\label{DDb}
\ee
Thus by using the above and (\ref{Ctn}), we could state (\ref{daim}) as the first order difference equation with respect to $t \geq 1$
 \[
\#a(\textrm{DVM}, 2^t)  - 2\cdot \#a\left(\textrm{DVM}, 2^{t-1}\right) = 4^{t-1}+ 2^{t-1}.
\]
Solving the above difference equation using the initial condition $\#a(\textrm{DVM}, 2)=2$, we can obtain
\[
\#a(\textrm{DVM}, 2^t) =\frac{1}{2}Nt+\frac{1}{2}4^t-\frac{1}{2}N.
\]
Now by using the ${\bf dvm}(N, \alpha, {\bf z})$ algorithm, (\ref{Ctn}), and (\ref{DDb}), we could obtain another first order difference equation with respect to $t \geq 2$
 \[
\#m(\textrm{DVM}, 2^t)  - 2\cdot \#m\left(\textrm{DVM}, 2^{t-1}\right) = 4^{t-1}+ 7\cdot 2^{t-1}.
\]
Solving the above difference equation using the initial condition $\#m(\textrm{DVM}, 2)=2$, we can obtain
\[
\#m(\textrm{DVM}, 2^t) =\frac{7}{2}Nt+\frac{1}{2}4^t-\frac{7}{2}N.
\]
\end{proof}

\subsection{Numerical Results for the Complexity of DVM Algorithms}
\label{sec:numcom}
Numerical results for the arithmetic complexity of the proposed algorithms derived via Lemma \ref{lemma:sdvm} and \ref {lemma:dvm} will be shown in this section.
Figure \ref{figamc} shows the arithmetic complexity of the proposed algorithms vs the direct matrix-vector computations with the matrix size varying from $4 \times 4$ to $4096 \times 4096$. We consider the direct computation of the matrix $\tilde{\bf A}_N$ by the vector ${\bf z}$ cost $N(N-1)$ additions and multiplications (refer to Direct sDVM in Figure \ref{figamc}) and, the matrix ${\bf A}_N$ by the vector ${\bf z}$ cost $N(N-1)$ additions and $N^2$ multiplications (refer to Direct DVM in Figure \ref{figamc}).

\begin{figure*}
\centering
\includegraphics[scale=0.9]{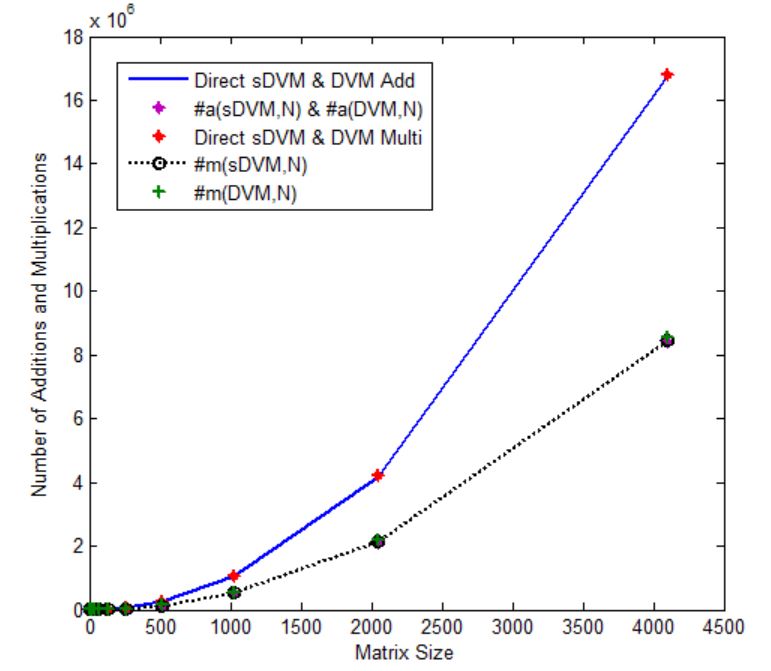}
\caption{Addition and multiplication counts in computing the scaled DVM  and DVM algorithms vs the direct matrix-vector computation.}
\label{figamc}
\end{figure*}

Following the Figure~\ref{figamc}, the scaled DVM and DVM algorithms have the same addition counts and the similar multiplication counts. When the size of the matrices increases the proposed algorithms require fewer addition and multiplication counts as opposed to the direct matrix-vector computation. Moreover, for large $N$, the proposed algorithms have saved $\approx 50\%$ of addition and multiplication counts as opposed to the direct brute-force matrix-vector calculation. As we couldn't distinguish the explicit addition and multiplication counts between the proposed algorithms through the Figure~\ref{figamc}, we have included the explicit counts using the Tables~\ref{tbl:sDVMcost} and~\ref{tbl:DVMcost} in Appendix \ref{appn}. These counts are based on the results obtained in Lemma \ref{lemma:sdvm} and \ref {lemma:dvm}.

\section{Analytic and Numerical Error Bounds of DVM Algorithms}
\label{sec:bound}
\subsection{Theoretical Bounds}
Error bounds of computing the scaled DVM and DVM algorithms is the main concern in this section.
To do so, we use the perturbation of the product of matrices
(stated in~\cite{H96}).
Following the ${\bf sdvm}(N, \alpha, {\bf z})$ and ${\bf dvm}(N, \alpha, {\bf z})$ algorithms,
we have to compute weights $\alpha^{k}$ for $k=0, 1, \ldots, N-1$.
These weights affect the accuracy of the DVM algorithms.
Thus, we will assume that the computed weights
$\widehat{\alpha}^k$ are used and satisfy for all
$k=0, 1, \ldots, N-1$
\be
\widehat{\alpha}^k = \alpha^k + \epsilon_k, \:\:\: |\epsilon_k| \leq \mu,
\label{scerror}
\ee
where $\mu:=cu$,
$u$ is the unit roundoff,
and
$c$ is a constant that depends on the method~\cite{VL92}.

Let's recall the perturbation of the product of matrices
stated in~\cite[Lemma~3.7]{H96}
i.e. if $\bA_k+\Delta \bA_k \in \mathbb{R}^{N \times N}$ satisfies $| \Delta \bA_k | \leq \delta_k |\bA_k|$ for all $k$, then
\be
\begin{matrix}
\Bigg| \displaystyle\prod_{k=0}^m \left( \bA_k+\Delta \bA_k\right) - \displaystyle\prod_{k=0}^m \bA_k   \Bigg|
 \leq \Bigg( \displaystyle\prod_{k=0}^m (1+\delta_k) -1 \Bigg) \displaystyle\prod_{k=0}^m \Bigg| \bA_k \Bigg|
\end{matrix}
,
\nonumber
\ee
where $|\delta_k| < u$.
Moreover, recall $\displaystyle\prod_{k=1}^N (1+\delta_k)^{\pm 1} = 1+\theta_N$ where $|\theta_N| \leq \frac{Nu}{1-Nu}=:\gamma_N$ and $\gamma_k+u \leq \gamma_{k+1}$,  $\gamma_k+\gamma_j+\gamma_k\gamma_j \leq \gamma_{k+j}$
from \cite[Lemma~3.1 and Lemma~3.3]{H96},
and for $x, y \in \mathbb{C}$, $fl(x \pm y)=(x +y)(1 + \delta)$ where $|\delta| \leq u$, $fl(xy)=(xy)(1 + \delta)$ where $|\delta| \leq \sqrt{2}\gamma_2$ from~\cite[Lemma~3.5]{H96}.

In the following, we will prove the error bound on computing the scaled DVM and DVM algorithms.
\begin{theorem}
\label{ThErrsDVM}
Let $\widehat{{\bf y}}=fl(\tilde{\bf A}_N {\bf z})$, where $N=2^t(t \geq 2)$, be computed using the ${\bf sdvm}(N, \alpha, {\bf z})$ algorithm, and assume that (\ref{scerror}) holds. Then
\be
\begin{aligned}
\left | {\bf y}-\widehat{{\bf y}} \right |   \leq   \frac{t \eta}{1-t\eta} & \left| {\bf P}(0) \right|\left|{\bf P}(1)\right|\cdots  \left| {\bf P}(t-2)\right| \left| {\bf \tilde{A}}(t-1) \right| \\
&
\left|{\bf \tilde{C}}(t-2)\right|  \cdots \left| {\bf \tilde{C}}(1)\right| \left| {\bf \tilde{C}}(0)\right| \left|{\bf z}\right|,
\end{aligned}
\label{sEDVM}
\ee
where $\eta=(\mu + \gamma_{2^{t}}(1+\mu))$.
\end{theorem}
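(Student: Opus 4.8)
The plan is to unroll the recursion in the $\mathbf{sdvm}(N,\alpha,{\bf z})$ algorithm into an explicit product factorization of $\tilde{\bf A}_{N,\alpha}$ and then apply the componentwise product‑perturbation bound of \cite[Lemma~3.7]{H96} reproduced above. First I would iterate Lemma~\ref{DVMLem}: applying \eqref{eq:sDVM} to $\tilde{\bf A}_{N,\alpha}$, then applying it again to each of the two diagonal blocks $\tilde{\bf A}_{N/2,\alpha^2}$, and continuing down to the $2\times2$ base blocks, yields
\[
\tilde{\bf A}_{N,\alpha}={\bf P}(0)\,{\bf P}(1)\cdots{\bf P}(t-2)\,\tilde{\bf A}(t-1)\,\tilde{\bf C}(t-2)\cdots\tilde{\bf C}(1)\,\tilde{\bf C}(0),
\]
where ${\bf P}(j)$ is the block‑diagonal even--odd permutation acting at level $j$, $\tilde{\bf C}(j)$ is the block‑diagonal matrix with $2^{j}$ copies of $\tilde{\bf C}_{N/2^{j}}$ (each copy built from the appropriate companion‑matrix power and diagonal scalings), and $\tilde{\bf A}(t-1)$ is the block‑diagonal matrix whose $N/2$ diagonal blocks each equal the $2\times2$ matrix of step~2 of the algorithm with $\alpha$ replaced by $\alpha^{2^{t-1}}$. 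This exhibits $t-1$ permutation factors ${\bf P}(0),\dots,{\bf P}(t-2)$ and $t$ genuine factors, namely $\tilde{\bf A}(t-1),\tilde{\bf C}(t-2),\dots,\tilde{\bf C}(0)$ (in product order).

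Second, I would set up the rounding‑error model for evaluating this product. The algorithm forms one matrix--vector product per factor, from right to left; by the stated conventions the permutations are applied exactly and at no arithmetic cost, so only the $t$ genuine factors contribute error. For a genuine factor ${\bf M}$ I would fold together two sources: (i) its entries are powers of $\alpha$ (at the relevant scale) that are only available as the computed weights $\widehat{\alpha}^{\,\ell}$ obeying \eqref{scerror}, so the factor actually used is $\widehat{\bf M}={\bf M}+{\bf E}$ with ${\bf E}$ controlled by $\mu$; and (ii) the floating‑point inner products forming $\widehat{\bf M}{\bf v}$, each a sum of at most $2^{t}$ scaled terms, so that $fl(\widehat{\bf M}{\bf v})=(\widehat{\bf M}+\Delta){\bf v}$ with $|\Delta|\le\gamma_{2^{t}}\,|\widehat{\bf M}|$ by the $fl$ and inner‑product estimates of \cite[Lemmas~3.1,~3.5]{H96}. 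Combining these, the effective factor is ${\bf M}+\widetilde\Delta$ with $|\widetilde\Delta|\le\bigl(\mu+\gamma_{2^{t}}(1+\mu)\bigr)\,|{\bf M}|=\eta\,|{\bf M}|$, the cross term $\gamma_{2^{t}}\mu$ arising as the roundoff acting on the already‑perturbed entries.

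Third, I would apply the product‑perturbation inequality from \cite[Lemma~3.7]{H96} to the $t$ genuine factors (in the order $\tilde{\bf A}(t-1),\tilde{\bf C}(t-2),\dots,\tilde{\bf C}(0)$) with $\delta_k=\eta$ for every $k$, which gives
\[
|{\bf y}-\widehat{\bf y}|\le\bigl((1+\eta)^{t}-1\bigr)\,|{\bf P}(0)|\cdots|{\bf P}(t-2)|\,|\tilde{\bf A}(t-1)|\,|\tilde{\bf C}(t-2)|\cdots|\tilde{\bf C}(0)|\,|{\bf z}|,
\]
the exact permutation factors being reinserted with $|{\bf P}(j)|={\bf P}(j)$. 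Finally, $(1+\eta)^{t}-1\le\frac{t\eta}{1-t\eta}$ by the same $\gamma$‑type estimate used for $\prod(1+\delta_k)$ in \cite[Lemma~3.1]{H96} (with $\eta$ in the role of $u$ and $t$ products), and substituting this yields exactly \eqref{sEDVM}.

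The step I expect to be the main obstacle is the second one: packaging the \emph{absolute} perturbation of the precomputed weights from \eqref{scerror} together with the \emph{relative} rounding error $\gamma_{2^{t}}$ of each matrix--vector product into a single componentwise \emph{relative} bound $\eta\,|{\bf M}|$ for each factor, and in particular justifying that $\gamma_{2^{t}}$ (rather than a smaller count reflecting the true per‑row sparsity of $\tilde{\bf C}(j)$ at deeper levels) is a valid uniform choice across all $t$ levels. Once that normalization is in place, the telescoping product bound and the conversion $(1+\eta)^{t}-1\le\frac{t\eta}{1-t\eta}$ are routine given the lemmas already quoted in the text.
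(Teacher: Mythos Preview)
Your proposal is correct and follows essentially the same route as the paper: unroll the recursion into the product ${\bf P}(0)\cdots{\bf P}(t-2)\,\tilde{\bf A}(t-1)\,\tilde{\bf C}(t-2)\cdots\tilde{\bf C}(0)$, bound each non-permutation factor by combining the weight perturbation $\mu$ from \eqref{scerror} with the floating-point inner-product error to obtain a uniform relative bound $\eta=\mu+\gamma_{2^t}(1+\mu)$, apply \cite[Lemma~3.7]{H96}, and then use $(1+\eta)^t-1\le\frac{t\eta}{1-t\eta}$. The only cosmetic difference is that the paper first records the sharper level-dependent count $\gamma_{2^{t-1-s}+3}$ for $\tilde{\bf C}(s)$ (and $\gamma_3$ for $\tilde{\bf A}(t-1)$) before relaxing all of them to $\gamma_{2^t}$, which is precisely the uniformization you flagged as the anticipated obstacle; it is handled simply by monotonicity of $\gamma_k$.
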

\begin{proof}
Using the ${\bf sdvm}(N, \alpha, {\bf z})$ algorithm and the computed matrices $\widehat{{\bf \tilde{C}}}(s)$ at the step numbers (number of executions/iterations of the algorithm) $s=0, 1, 2, \cdots, t-2$ in terms of computed weights $\widehat{\alpha}^k$ for $k=0, 1 \cdots, N-1$, we get
\be
\begin{aligned}
\widehat{{\bf y}}&=
fl \Bigg(
{\bf{P}}(0)\: {\bf{P}}(1)
\cdots
{\bf{P}}(t-2)\:
\widehat{\bf{\tilde{A}}}(t-1)
\\ & \hspace{1in}
\widehat{\bf{\tilde{C}}}(t-2)
\cdots
\widehat{\bf{\tilde{C}}}(2)
\widehat{\bf{\tilde{C}}}(1)
\widehat{\bf{\tilde{C}}}(0)\:
{\bf z}
\Bigg)
\\
&
=
{\bf{P}}(0)\: {\bf{P}}(1)
\cdots
{\bf{P}}(t-2)\:
\left(
\widehat{\bf{\tilde{A}}}(t-1) + \Delta{\widehat{\bf{\tilde{A}}}(t-1)}
\right)
\\
&
\hspace{.3in}
\left(
\widehat{\bf{\tilde{C}}}(t-2) + \Delta{\widehat{\bf{\tilde{C}}}(t-2)}
\right)
\cdots
\left(
\widehat{\bf{\tilde{C}}}(2) + \Delta{\widehat{\bf{\tilde{C}}}(2)}
\right)
\\
&
\hspace{.4in}
\left(
\widehat{\bf{\tilde{C}}}(1) + \Delta{\widehat{\bf{\tilde{C}}}(1)}
\right)
\left(
\widehat{\bf{\tilde{C}}}(0) + \Delta{\widehat{\bf{\tilde{C}}}(0)}
\right)\:
{\bf z},
\end{aligned}
\nonumber
\ee
where ${\bf{P}}(s):=2^s$ block diagonal matrices of $\bP_{2^{t-s}}^T$ and $\widehat{\bf{\tilde{C}}}(s):=2^s$ computed block diagonal matrices of ${\bf{\tilde{C}}}_{2^{t-s}}$. Using the fact that each ${\bf \tilde{C}}(s)$ is computed using the powers of companion matrix $\bC^{\frac{N}{2}}$ with $2^{t-1-s}$ non-zero entry per row, $\bf{\tilde{D}}$ with each one having one non-zero entry per row, and weight ${\alpha}^k$, we get
\be
\begin{matrix}
\left | \Delta{\widehat{\bf{\tilde{C}}}(s)}  \right | \leq {\gamma}_{2^{t-1-s}+3}\left |{\widehat{\bf{\tilde{C}}}(s)}  \right |\:\:\:{\rm for}\:\:\:s=0,1,\ldots,t-2.
\end{matrix}
\label{Ecth1}
\ee
with the use of complex arithmetic. By considering the computed weights $\widehat{\alpha}^k$ and evaluation at the weight ${\alpha}^2$ in each step i.e. using (\ref{scerror});
\be
\begin{matrix}
\widehat{\bf{\tilde{C}}}(s)={\bf{\tilde{C}}}(s)+ \Delta{{\bf{\tilde{C}}}(s)}, \hspace{.1in}  |\Delta{{\bf{\tilde{C}}}(s)}| \leq \mu |{\bf{\tilde{C}}}(s)|.
\end{matrix}
\label{Ecth2}
\ee
Since $\widehat{\bf{\tilde{A}}}(t-1)$ has $2^{t-1}$ block diagonal matrices of $\begin{bmatrix} 1 & 1 \\ 1 & \alpha\end{bmatrix}$, we get
\be
\left | \Delta{\widehat{\bf{\tilde{A}}}(t-1)} \right | \leq {\gamma}_{3}\:\left | {\bf{\tilde{A}}}(t-1) \right |.
\nonumber
\ee
By evaluating $\widehat{\bf{\tilde{A}}}(t-1)$ at the weight ${\alpha}^2$ in each step,
we obtain
\be
\begin{matrix}
\widehat{\bf{\tilde{A}}}(s)={\bf{\tilde{A}}}(s)+ \Delta{{\bf{\tilde{A}}}(s)}, \hspace{.1in}  |\Delta{{\bf{\tilde{A}}}(s)}| \leq \mu |{\bf{\tilde{A}}}(s)|.
\end{matrix}
\nonumber
\ee
Thus overall,
\be
\begin{aligned}
\widehat{{\bf y}}&={\bf{P}}(0)\: {\bf{P}}(1)\cdots{\bf{P}}(t-2)({\bf \tilde{A}}(t-1) + {\bf E}(t-1))
\\& \hspace{.3in}
({\bf \tilde{C}}(t-2)+{\bf E}(t-2))\cdots({\bf \tilde{C}}(1)+{\bf E}(1))
({\bf \tilde{C}}(0)+{\bf E}(0)) {\bf z},
\end{aligned}
\nonumber
\ee
where
$|{\bf E}(s)| \leq (\mu + \gamma_{2^{t}}(1+\mu)) |{\bf \tilde{C}}(s)|$
for $s=0, 1, \ldots, (t-2)$
and
$|{\bf E}(t-1)| \leq (\mu + \gamma_{3}(1+\mu)) |{\bf \tilde{A}}(t-1)|$.
Let $\eta=(\mu + \gamma_{2^{t}}(1+\mu))$.
Hence
\be
\begin{matrix}
\begin{aligned}
\left | {\bf y} -\widehat{{\bf y}} \right | \leq  & \left[ (1+\eta)^{t} -1\right]
 \left| {\bf P}(0) \right|\left|{\bf P}(1)\right|\cdots  \left| {\bf P}(t-2)\right|
\\
& \hspace{.3in} \left| {\bf \tilde{A}}(t-1) \right|\left|{\bf \tilde{C}}(t-2)\right|  \cdots \left| {\bf \tilde{C}}(1)\right| \left| {\bf \tilde{C}}(0)\right| \left|{\bf z}\right|
\\
&  \leq  \frac{t \eta}{1-t\eta}\left| {\bf P}(0) \right|\left|{\bf P}(1)\right|\cdots  \left| {\bf P}(t-2)\right| \left| {\bf \tilde{A}}(t-1) \right|
\\
& \hspace{.3in}
 \left|{\bf \tilde{C}}(t-2)\right|  \cdots \left| {\bf \tilde{C}}(1)\right| \left| {\bf \tilde{C}}(0)\right| \left|{\bf z}\right|
\end{aligned}
\end{matrix}
\nonumber
\ee
Hence the result.
\end{proof}

\begin{theorem}
\label{ErrDVMTh}
Let $\widehat{{\bf y}}=fl({\bf A}_N {\bf z})$,
where $N=2^t$
($t \geq 2$),
be computed using the ${\bf dvm}(N, \alpha, {\bf z})$ algorithm,
and assume that (\ref{scerror}) holds.
Then
\be
\begin{aligned}
\left | {\bf y}-\widehat{{\bf y}} \right | & \leq   \frac{(3t-2) \eta}{1-(3t-2)\eta} \left| {\bf P}(0) \right|\left|{\bf P}(1)\right|\cdots  \left| {\bf P}(t-2)\right|
\\
& \hspace{.3in}
\left| {\bf A}(t-1) \right| \left|{\bf \bar{\bar D}}(t-2)\right|  \cdots \left| {\bf \bar{\bar D}}(1)\right| \left| {\bf \bar{\bar D}}(0)\right|
\\
& \hspace{.4in}
\left|{\bf \tilde{C}}(t-2)\right|  \cdots \left| {\bf \tilde{C}}(1)\right| \left| {\bf \tilde{C}}(0)\right|
\\
& \hspace{.5in} \left|{\bf D}(t-2)\right|  \cdots \left| {\bf D}(1)\right| \left| {\bf D}(0)\right|
\left|{\bf z}\right|
\end{aligned}
\label{EDVM}
\ee
where $\eta=(\mu + \gamma_{2^{t}}(1+\mu))$.
\end{theorem}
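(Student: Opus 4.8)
The plan is to follow the proof of Theorem~\ref{ThErrsDVM} essentially verbatim, the only structural change being that each level of the recursion in $\mathbf{dvm}(N,\alpha,\mathbf{z})$ (Algorithm~\ref{algo:dvm}) contributes \emph{three} error-carrying matrix factors, $\mathbf{D}$, $\tilde{\mathbf{C}}$ and $\bar{\bar{\mathbf{D}}}$, rather than the single factor $\tilde{\mathbf{C}}$ of the scaled algorithm. First I would unroll Algorithm~\ref{algo:dvm} by iterating the factorization \eqref{eq:DVM} of Corollary~\ref{coro:DVM}: writing $\mathbf{P}(s)$, $\mathbf{D}(s)$, $\tilde{\mathbf{C}}(s)$ and $\bar{\bar{\mathbf{D}}}(s)$ for the $2^{s}$-fold block-diagonal copies of $\mathbf{P}_{2^{t-s}}^{T}$, $\mathbf{D}_{2^{t-s}}$, $\tilde{\mathbf{C}}_{2^{t-s}}$ and $\bar{\bar{\mathbf{D}}}_{2^{t-s}}$ for $s=0,1,\ldots,t-2$, and $\mathbf{A}(t-1)$ for the $2^{t-1}$-fold block-diagonal copies of the $2\times2$ base matrix appearing in step~2 of the algorithm, one expresses the exact output $\mathbf{y}$ as the product of the permutations $\mathbf{P}(0)\mathbf{P}(1)\cdots\mathbf{P}(t-2)$ (exact and exactly representable, hence contributing no error) times the $3t-2$ nontrivial factors: $\mathbf{A}(t-1)$, together with the $t-1$ matrices $\bar{\bar{\mathbf{D}}}(s)$, the $t-1$ matrices $\tilde{\mathbf{C}}(s)$, and the $t-1$ matrices $\mathbf{D}(s)$, all applied to $\mathbf{z}$.

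The second step is to bound the perturbation of each computed factor, combining the rounding error of the associated matrix--vector product with the weight error \eqref{scerror}, exactly as done in Theorem~\ref{ThErrsDVM}. Since $\mathbf{D}(s)$ and $\bar{\bar{\mathbf{D}}}(s)$ carry one nonzero per row, their floating-point application obeys a relative bound of the form $\sqrt2\,\gamma_{2}$ on absolute values; since $\tilde{\mathbf{C}}(s)$ is assembled from the companion power $\mathbf{C}_{2^{t-1-s}}^{\,2^{t-1-s}}$ (at most $2^{t-1-s}$ nonzeros per row), the diagonal $\tilde{\mathbf{D}}$, and the scalar $\alpha^{2^{t-1-s}}$, one gets $|\Delta\widehat{\tilde{\mathbf{C}}}(s)|\le\gamma_{2^{t-1-s}+3}|\widehat{\tilde{\mathbf{C}}}(s)|$ as in Theorem~\ref{ThErrsDVM}; and the $2\times2$ base block yields the constant $\gamma_{3}$. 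Inserting \eqref{scerror} for the weights $\widehat{\alpha}^{k}$ (and, for $\bar{\bar{\mathbf{D}}}$, controlling the reciprocals $1/\alpha^{2k}$ under \eqref{scerror}) collapses each of these into a single componentwise estimate of the form $|\mathbf{E}(\cdot)|\le\bigl(\mu+\gamma_{2^{t}}(1+\mu)\bigr)|\cdot|=\eta|\cdot|$, using, as in the proof of Theorem~\ref{ThErrsDVM}, that every $\gamma$ index occurring above is $\le 2^{t}$.

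Finally I would apply the product-perturbation lemma \cite[Lemma~3.7]{H96} recalled before the theorem, with $m=3t-2$ perturbed factors each satisfying a uniform relative bound $\eta$, to get $|\mathbf{y}-\widehat{\mathbf{y}}|\le\bigl[(1+\eta)^{3t-2}-1\bigr]\,|\mathbf{P}(0)|\cdots|\mathbf{P}(t-2)|\,|\mathbf{A}(t-1)|\,|\bar{\bar{\mathbf{D}}}(t-2)|\cdots|\bar{\bar{\mathbf{D}}}(0)|\,|\tilde{\mathbf{C}}(t-2)|\cdots|\tilde{\mathbf{C}}(0)|\,|\mathbf{D}(t-2)|\cdots|\mathbf{D}(0)|\,|\mathbf{z}|$, after regrouping the level-by-level interleaved diagonal factors (which commute among themselves) into the three blocks of \eqref{EDVM}; then the elementary inequality $(1+\eta)^{3t-2}-1\le\frac{(3t-2)\eta}{1-(3t-2)\eta}$, valid once $(3t-2)\eta<1$, from the $\gamma$-calculus of \cite[Lemma~3.1 and Lemma~3.3]{H96} already used in Theorem~\ref{ThErrsDVM}, yields \eqref{EDVM}. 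The main obstacle is the bookkeeping in the second step --- pinning down the per-level nonzero counts inside $\tilde{\mathbf{C}}(s)$ and hence the precise $\gamma$ indices, checking they are dominated by $\gamma_{2^{t}}$, and verifying that exactly $3t-2$ error-carrying factors occur so that the constant $3t-2$ (and not $3t$ or $3(t-1)$) is the one that appears; a secondary subtlety is the regrouping of the interleaved product into the block form of \eqref{EDVM}, which is cleanest once one has passed to the monotone bound on absolute values.
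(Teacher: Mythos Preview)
Your proposal is correct and follows essentially the same approach as the paper's proof: unroll the recursion into $3t-2$ error-carrying factors, bound each componentwise by $\eta$, and apply the product-perturbation lemma. The paper actually introduces an intermediate constant $\eta_1=\mu+\gamma_4(1+\mu)$ for the diagonal and base-case factors before majorizing by $\eta$, and uses $\gamma_4$ (not $\gamma_3$) for the $2\times2$ base block since $\begin{bmatrix}1&\alpha\\1&\alpha^2\end{bmatrix}$ requires two nontrivial multiplications---but as you note, all these indices are dominated by $2^t$, so the slip is immaterial to the final bound.
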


\begin{proof}
Using  the ${\bf dvm}(N, \alpha, {\bf z})$  algorithm and the computed matrices $\widehat{\bf {\bar{\bar D}}}(s)$, $\widehat{{\bf \tilde{C}}}(s)$, and $\widehat{\bf{D}}(s)$ at the step numbers (execution/iteration step of the algorithm) $s=0, 1, \ldots, t-2$ in terms of computed weights
$\widehat{\alpha}^k$ for $k=0, 1 \ldots, N-1$, we get
\be
\begin{aligned}
\widehat{{\bf y}}&=fl \Bigg({\bf{P}}(0)\: {\bf{P}}(1)\cdots{\bf{P}}(t-2)\:\widehat{\bf{A}}(t-1)
\\
& \hspace{.5in}
\widehat{\bf {\bar{\bar D}}}(t-2)\cdots \widehat{\bf {\bar{\bar D}}}(2)\widehat{\bf {\bar{\bar D}}}(1)\widehat{\bf {\bar{\bar D}}}(0)
\\
& \hspace{.5in}
\widehat{\bf{\tilde{C}}}(t-2)\cdots \widehat{\bf{\tilde{C}}}_N(2)\widehat{\bf{\tilde{C}}}(1)\widehat{\bf{\tilde{C}}}(0)
\\
& \hspace{.5in}
\widehat{\bf {D}}(t-2)\cdots \widehat{\bf{D}}_N(2)\widehat{\bf{D}}(1)\widehat{\bf{D}}(0)
\: {\bf z} \Bigg)\\
&={\bf{P}}(0)\: {\bf{P}}(1)\cdots{\bf{P}}(t-2)\:\left(\widehat{\bf{A}}(t-1) + \Delta{\widehat{\bf{A}}(t-1)}\right)
\\& \hspace{.05in}
\left( \widehat{\bf {\bar{\bar D}}}(t-2) + \Delta{\widehat{\bf {\bar{\bar D}}}(t-2)} \right) \cdots \left(\widehat{\bf {\bar{\bar D}}}(2) + \Delta{\widehat{\bf {\bar{\bar D}}}(2)} \right)
\\& \hspace{.1in}
\left(\widehat{\bf {\bar{\bar D}}}(1) + \Delta{\widehat{\bf {\bar{\bar D}}}(1)} \right)
\left( \widehat{\bf {\bar{\bar D}}}(0) + \Delta{\widehat{\bf {\bar{\bar D}}}(0)}\right)
\\& \hspace{.05in}
\left( \widehat{\bf{\tilde{C}}}(t-2) + \Delta{\widehat{\bf{\tilde{C}}}(t-2)} \right) \cdots \left(\widehat{\bf{\tilde{C}}}(2) + \Delta{\widehat{\bf{\tilde{C}}}(2)} \right)
\\& \hspace{.1in}
\left(\widehat{\bf{\tilde{C}}}(1) + \Delta{\widehat{\bf{\tilde{C}}}(1)} \right)
\left( \widehat{\bf{\tilde{C}}}(0) + \Delta{\widehat{\bf{\tilde{C}}}(0)}\right)
\\& \hspace{.05in}
\left( \widehat{\bf{D}}(t-2) + \Delta{\widehat{\bf{D}}(t-2)} \right) \cdots \left(\widehat{\bf{D}}(2) + \Delta{\widehat{\bf{D}}(2)} \right)
\\& \hspace{.1in}
\left(\widehat{\bf{D}}(1) + \Delta{\widehat{\bf{D}}(1)} \right)
\left( \widehat{\bf{D}}(0) + \Delta{\widehat{\bf{D}}(0)}\right) {\bf z}.
\end{aligned}
\nonumber
\ee
where ${\bf P}(s):=2^s$ block diagonal matrices of $\bP_{2^{t-s}}^T$, $\widehat{\bf{{\bar{\bar D}}}}(s):=2^s$ computed block diagonal matrices of ${\bar{\bar \bD}}_{2^{t-s}}$, $\widehat{\bf{\tilde{C}}}(s):=2^s$ computed block diagonal matrices of ${\bf{\tilde{C}}}_{2^{t-s}}$, and $\widehat{\bf{D}}(s):=2^s$ computed block diagonal matrices of $\bD_{2^{t-s}}$. Using the fact that each $\bf{{\bar{\bar D}}}(s)$ and ${\bf D}(s)$ have one non-zero entry per row and following complex arithmetic, we get
\be
\begin{matrix}
\left | \Delta{\widehat{\bf{{\bar{\bar D}}}}(s)}  \right | \leq {\gamma}_{2}\left |\widehat{\bf{{\bar{\bar D}}}}(s)  \right |\:\:\:{\rm for}\:\:\:
s=0,1,\ldots,t-2.
\\
\text{and}
\\
\left | \Delta{\widehat{\bf{D}}(s)}  \right | \leq {\gamma}_{2}\left | \widehat{\bf{D}}(s)  \right |\:\:\:{\rm for}\:\:\:
s=0,1,\ldots,t-2.
\end{matrix}
\nonumber
\ee
By considering the computed weights $\widehat{\alpha}^k$ and evaluation at the weight $\alpha^2$ in each step i.e. using (\ref{scerror}),
we have
\be
\begin{matrix}
\widehat{\bf{{\bar{\bar D}}}}(s)={\bf {\bar{\bar D}}}(s)+ \Delta{{\bf {\bar{\bar D}}}(s)}, \hspace{.1in}  |\Delta{{\bf {\bar{\bar D}}}(s)}| \leq \mu |{\bf {\bar{\bar D}}}(s)|  \\
{\rm and}
\\
\widehat{\bf{D}}(s)={\bf D}(s)+ \Delta{{\bf D}(s)}, \hspace{.1in}  |\Delta{{\bf D}(s)}| \leq \mu |{\bf D}(s)| .
\end{matrix}
\nonumber
\ee
Since $\widehat{\bf A}(t-1)$ has $2^{t-1}$ block diagonal matrices of $\begin{bmatrix} 1 & \alpha \\ 1 & \alpha^2\end{bmatrix}$, we get
\be
\left | \Delta{\widehat{\bf{A}}(t-1)} \right | \leq {\gamma}_{4}\:\left| {\bf A}(t-1) \right|.
\nonumber
\ee
By evaluating $\widehat{\bf{A}}(t-1)$ at the weight $\alpha^2$ in each step,
we obtain
\be
\begin{matrix}
\widehat{\bf{A}}(s)={\bf{A}}(s)+ \Delta{{\bf{A}}(s)}, \hspace{.1in}  |\Delta{{\bf{A}}(s)}| \leq \mu |{\bf{\tilde{A}}}(s)|.
\end{matrix}
\nonumber
\ee
Together with (\ref{Ecth1}) and (\ref{Ecth2}) and overall,
\be
\begin{aligned}
\widehat{{\bf y}}&={\bf{P}}(0)\: {\bf{P}}(1)\cdots{\bf{P}}(t-2)({\bf A}(t-1) + {\bf E}(t-1))
\\& \hspace{.1in}
({\bf \bar{\bar D}}(t-2)+{\bf E_1}(t-2))\cdots({\bf \bar{\bar D}}(0)+{\bf E_1}(0))
\\& \hspace{.2in}
({\bf \tilde{C}}(t-2)+{\bf E_2}(t-2))\cdots({\bf \tilde{C}}(0)+{\bf E_2}(0))
\\& \hspace{.3in}
({\bf D}(t-2)+{\bf E_3}(t-2))\cdots({\bf D}(0)+{\bf E_3}(0)) {\bf z},
\end{aligned}
\nonumber
\ee
where $|{\bf E_1}(s)| \leq (\mu + \gamma_{2}(1+\mu)) |{\bf \bar{\bar D}}(s)|$, $|{\bf E_2}(s)| \leq (\mu + \gamma_{2^{t}}(1+\mu)) |{\bf \tilde{C}}(s)|$, and $|{\bf E_3}(s)| \leq (\mu + \gamma_{2}(1+\mu)) |{\bf D}(s)|$ for $s=0, 1, 2, \cdots (t-2)$ and $|{\bf E}(t-1)| \leq (\mu + \gamma_{4}(1+\mu)) |{\bf A}(t-1)|$. Let $\eta=(\mu + \gamma_{2^{t}}(1+\mu))$ and $\eta_1=(\mu + \gamma_{4}(1+\mu))$. Hence
\be
\begin{matrix}
\begin{aligned}
\left | {\bf y} -\widehat{{\bf y}} \right | \leq  & \left[ (1+\eta)^{t-1}(1+\eta_1)^{2t-1} -1\right]
 \left| {\bf P}(0) \right|\cdots  \left| {\bf P}(t-2)\right|
\\
& \hspace{.1in} \left| {\bf A}(t-1) \right| \left|{\bf \bar{\bar D}}(t-2)\right|  \cdots \left| {\bf \bar{\bar D}}(1)\right| \left| {\bf \bar{\bar D}}(0)\right|
\\
& \hspace{.2in}
\left|{\bf \tilde{C}}(t-2)\right|  \cdots \left| {\bf \tilde{C}}(1)\right| \left| {\bf \tilde{C}}(0)\right|
\\
& \hspace{.3in} \left|{\bf D}(t-2)\right|  \cdots \left| {\bf D}(1)\right| \left| {\bf D}(0)\right|
\left|{\bf z}\right|
\\
&  \leq  \frac{(3t-2) \eta}{1-(3t-2)\eta} \left| {\bf P}(0) \right|\left|{\bf P}(1)\right|\cdots  \left| {\bf P}(t-2)\right|
\\
& \hspace{.1in} \left| {\bf A}(t-1) \right| \left|{\bf \bar{\bar D}}(t-2)\right|  \cdots \left| {\bf \bar{\bar D}}(1)\right| \left| {\bf \bar{\bar D}}(0)\right|
\\
& \hspace{.2in}
\left|{\bf \tilde{C}}(t-2)\right|  \cdots \left| {\bf \tilde{C}}(1)\right| \left| {\bf \tilde{C}}(0)\right|
\\
& \hspace{.3in} \left|{\bf D}(t-2)\right|  \cdots \left| {\bf D}(1)\right| \left| {\bf D}(0)\right|
\left|{\bf z}\right|
\end{aligned}
\end{matrix}
\nonumber
\ee
Hence the result.
\end{proof}

Lemma \ref{ThErrsDVM} shows that the forward error bound of the proposed scaled DVM algorithm depends on the size of the matrices $N$, norms of the matrices ${\bf \tilde{C}}(k)$ for $k=0, 1, \cdots, t-2$, and the computed weights. Also, Lemma \ref{ErrDVMTh} shows that the forward error bound of the proposed DVM algorithm depends on the size of the matrices $N$, norms of the matrices ${\bf \bar{\bar D}}(k)$, ${\bf \tilde{C}}(k)$, and ${\bf D}(k)$ for $k=0, 1, \cdots, t-2$, and the computed weights. Thus, the error bound of the proposed scaled DVM and DVM algorithms rapidly increase with the size of the matrices, norms of the powers of matrices, and powers of $\alpha$'s. Hence, the proposed algorithms can not be computed stably for large matrices. This will further be shown through the numerical results in section \ref{sec:num}.
\subsection{Numerical Results}
\label{sec:num}
In this section,
we state numerical results
in connection to the stability of the proposed
algorithm~\ref{algo:dvm} using MATLAB (R2014a version)
with machine precision 2.2204e-16.
Forward error results are presented by taking the exact solutions as the output of the scaled DVM or DVM algorithm computed with the double precision and the computed value as the output of the proposed ${\bf sdvm}(N, \alpha, {\bf z})$ or ${\bf dvm}(N, \alpha, {\bf z})$ algorithms with single precision.
We will show numerical results for matrix sizes from $4 \times 4$ to $128 \times 128$ with $|\alpha| = 1$.

We compare the relative forward error $e$ of the proposed scaled DVM and DVM algorithms defined by
$$
e
=
\frac{\left \| {\bf y}-\hat{\bf y} \right\|_2}
{\left \|{\bf y}\right\|_2}
,
$$
where ${\bf y}={\bf \tilde{A}}_N {\bf z}$ or ${\bf y}={\bf A}_N {\bf z}$ is the exact solution computed using the scaled DVM  or DVM algorithm, respectively,
with double precision and $\hat{\bf y}$ is the computed solution of the algorithms ${\bf sdvm}(N, \alpha, {\bf z})$ or ${\bf dvm}(N, \alpha, {\bf z})$, respectively, with single precision.

Table~\ref{tbl:rande}
shows numerical results for
the forward error of the proposed ${\bf sdvm}(N, \alpha, {\bf z})$
and
${\bf dvm}(N, \alpha, {\bf z})$ algorithms with  $|\alpha|=1$,
and random real and complex inputs
${\bf z_1}$ and ${\bf z_2}$, respectively,
of the scaled DVM and DVM, say Err-sDVM and Err-DVM, respectively.

\begin{table}[h]
\centering
\caption{Forward error in calculating the scaled DVM  and DVM algorithms with $\alpha=e^{-\frac{\pi i}{32}}$, uniformly distributed random input in the interval (0,1), say ${\bf z_1}$ for each $N$, and uniformly distributed random input with real and imaginary parts in the interval (0,1) for each $N$, say ${\bf z_2}$.}
\begin{tabular}{ | c | c | c | c | c | }
    \hline
    {\bf $N$}   & Err-sDVMA & Err-DVM &  Err-sDVM & Err-DVM   \\
   & with ${\bf z_1}$ & with ${\bf z_1}$ & with ${\bf z_2}$ & with ${\bf z_2}$   \\
\hline \hline
4 & 2.367e-08 &  5.648e-08  &   3.577e-08  &  6.855e-08 \\ \hline
8 & 6.118e-08 &  4.952e-08  &  5.959e-08  & 6.820e-08 \\ \hline
16 & 4.676e-08 &  5.529e-08  &  1.010e-08  & 7.449e-08 \\ \hline
32 & 1.022e-08 &  1.262e-07  & 1.067e-08   & 1.279e-07 \\ \hline
64 & 7.008e-08 &   1.138e-07 &  1.568e-07  & 1.373e-07\\ \hline
128 & NaN  &   NaN &  NaN  &NaN\\ \hline
\end{tabular}
\label{tbl:rande}
\end{table}

As shown in Table \ref{tbl:rande}, when $\alpha=e^{-\frac{\pi i}{32}}$ and $N \geq 128$, the MATLAB output will produce NaN for the forward error of the proposed algorithms. This is because the nodes will be repeated and hence the resulting singular matrices while the proposed algorithms are executed for $N > 64$.
Even if $\alpha \neq e^{-\frac{\pi i}{32}}$, but $|\alpha| =1$ and not a root of unity, we have to compute very large powers of matrices
(recall that we compute powers of companion matrices having large powers of $\alpha$'s) and differences of close numbers. Thus, the entries resulting from such operations cannot be represented
as conventional floating-point values and hence lead to undefined numerical values through MATLAB output. This is also evident from the theoretical error bounds obtained in section \ref{sec:bound}.

\section{Future Engineering Tasks}
\label{sec:appln}

\subsection{Analog and/or Digital Circuits that Realize the DVM Algorithm}

Engineering applications require the real-time implementation of the DVM algorithm using a variety of computational platforms. High-speed applications revolving around wireless communications and radar systems typically necessitate analog implementations, which operate on analog signals from an array of sensors, such as antennas. These analog implementations typically employ approximations to ideal time delays in the signal flow graphs, using techniques such as transmission line segments, passive resistor-capacitor lattice filters, or other types of analog delays. Analog realizations, in their most direct form, utilize microwave transmission lines to implement the delays. A microwave transmission line of length $l$ approximates to sufficient accuracy the time delay $T$ where $T=\alpha l/c$ for which $\alpha\le 1$ is the velocity factor of the transmission line. Typically, these transmission lines can be a length of cable of copper track (coplanar waveguide) on a printed circuit board.
When the physical size requirements necessitate smaller circuits, transmission lines can be approximated using analog all-pass filters that can be implemented using integrated circuits \cite{L1, L2}.

Unlike analog DVM circuits requiring delays, digital DVM implementations, may either be in software, using  computer software realizations where the speeds of operation are relatively low (for example, graphics processor units), or in custom digital hardware integrated circuits, for high-speed realizations based on very large scale integration. In both cases, the true time delays found as a basic building block of the DVM algorithm will be approximated using discrete time interpolation filters~\cite{proakis}. For example, various time delays can be rational fractions of the digital systems clock sample period, and can therefore be approximately realized using  both finite impulse response digital interpolation filters as well as infinite impulse response digital interpolation filters. A detailed discussion of the possible approaches for real-time implementation of the DVM algorithms, albeit analog or digital, remains for a future exploration.

\subsection{Low-complexity Algorithms based on Matrix Approximation}

In several applied contexts,
the physics of the problem
admits an appreciable level of error tolerance.
For instance,
this is illustrated in the context of
still image compression~\cite{bas2008},
video encoding~\cite{britanak2007discrete},
beamforming~\cite{suarez2014multi},
motion tracking~\cite{coutinho2017low},
and
biomedical image processing~\cite{cintra2018signal}.
Therefore,
the exact operation
of a given matrix computation
can be relaxed
into
an approximate calculation
that
is carefully tailored to
demand
a
lower arithmetic complexity
when compared
to the original exact computation.
This can be accomplished
by
deriving an approximate matrix
based on the exact matrix.

Approximate matrices can be designed
by several methods,
including
rough inspection,
number representation in dyadic rationals,
and
integer optimization,
to cite a few.
Integer optimization
is often the method of choice
due to its
generality.
The general framework
is described as follows:
\begin{align}
\label{equation-optimization}
\hat{\mathbf{T}}^\ast
=
\arg
\min_{
\hat{\mathbf{T}}
\in
\mathcal{M}_P(N)
}
\operatorname{error}
(
\hat{\mathbf{T}}
,
\mathbf{T}
)
\end{align}
where
$\mathbf{T}$ is the matrix to be approximated
and
$\hat{\mathbf{T}}$
is a candidate matrix
defined
over a low-complexity matrix set~\cite{tablada2015class}.
The error function
is closely linked to the physics of the context
where the approximation is intended to be applied.
Common error functions
are
the Frobenius norm
or
the mean square error~\cite{britanak2007discrete}.
The search space
$\mathcal{M}_P(N)$
is the set of $N\times N$ matrices
with entries defined over the low-complexity integer set~$P$.
A particular common choice for
the set $P$
includes the set of trivial multiplicands
$P_1 = \{0, \pm1, \pm2\}$~\cite{blahut2010}
or
$P_1^2$ for approximations over complex integers~\cite{suarez2014multi}.

For the DVM matrices,
there are two major approaches for deriving
approximations:
(i)~directly approximating the non-factorized
delay Vandermonde matrix
by means of solving~\eqref{equation-optimization}
and
(ii)~approximating only the non-trivial multipliers
in the DVM factorized form
(Corollary~\ref{coro:DVM}).
The former approach has the advantage of
being less restrictive,
but a fast algorithm (factorization)
of the obtained approximation
is left to be derived.
On the other hand,
by approximating from the factorized form,
one has the fast algorithm readily available by construction,
however the derived approximation
is tied to the particular
structure of the considered factorization.
As demonstrated in the context of trigonometric discrete transforms,
approximations
lead to a tunable trade-off between
performance and arithmetic complexity,
often resulting in dramatic reductions in computational cost.
DVM matrices could benefit from similar strategies.

\section{Conclusion}
\label{sec:con}
We have proposed an efficient and self-recursive DVM algorithm having sparse factors. Arithmetic complexities of the proposed algorithm are provided to show that the proposed algorithm is much more efficient than the direct computation of DVM by a vector. The theoretical error bound on computing the proposed algorithm is established. Numerical results of the forward relative error are utilized to analyze the stability of the proposed algorithm. The proposed algorithm lowers the computational complexity of the computation of $N$ parallel RF beams using an array of antennas as detailed in the preceding analysis. Engineering approaches to real-time implementation of the proposed fast algorithms generally take two forms: 1) analog implementations, employing signals which are
continuous in time, continuous in their range, and free of aliasing and quantization effects, and 2) digital implementations, employing signals which are discrete in time (i.e., sampled sequences), discrete in their range (e.g., quantized to be in a set of known values), and therefore susceptible to both aliasing and quantization noise. Typically, discrete domain signals are processed using digital electronics and software, while analog signals are processed using RF integrated circuits, microwave- and mm-wave passive circuits, or photonic integrated circuits. The algorithms proposed here are agnostic to the type of implementation and lend themselves to all types of engineering approaches based on photonics, analog circuits and digital systems, and hybrids thereof. The potential applications of such systems span emerging 5G/6G wireless networks,   wireless IoT, CPS, radio astronomy instrumentation, radar and wireless sensing systems, among others.

\section*{Acknowledgments}
The authors would like to thank Austin Ogle for his valuable time and help that allowed to improve the exposition of the manuscript.

\appendix

\section{Addition and Multiplication Counts}
\label{appn}
The explicit addition and multiplication counts of the proposed scaled DVM  and  DVM  algorithms (proved in Lemma \ref{lemma:sdvm} and \ref {lemma:dvm}) opposed to the direct matrix-vector (which we call as the Direct Add and Direct Multi) computation are shown in Tables \ref{tbl:sDVMcost} and \ref{tbl:DVMcost}).

\begin{table}[h]
\centering
\caption{Arithmetic complexity of the scaled DVM algorithm vs Direct computation}
\begin{tabular}{ | l | l | l | l | l |}
   \hline
    {\bf $N$} & Direct Add & $\#a(sDVM, N)$ & Direct Multi & $\#m(sDVM, N)$ \\
\hline \hline
4&  12  & 10  & 12  & 12   \\ \hline
8& 56 & 40  &  56 & 52 \\ \hline
16  &240  & 152  &  240 & 192  \\ \hline
   32   & 992  &  576 &  992 & 688  \\ \hline
64  & 4032  & 2208  & 4032 & 2496 \\ \hline
128 & 16256 & 8576 & 16256 &9280 \\ \hline
256 &65280 &33664 & 65280 & 35328 \\ \hline
512 & 261632 & 133120 & 261632 & 136960  \\ \hline
1024 & 1047552 & 528896 &  1047552 & 537600 \\ \hline
2048 & 4192256 & 2107392 &  4192256 & 2126848 \\ \hline
4096 & 16773120 & 8411136 &  16773120 & 8454144 \\ \hline
\hline
\end{tabular}
\label{tbl:sDVMcost}
\end{table}
\begin{table}[h]
\centering
\caption{Arithmetic complexity of the DVM algorithm vs Direct computation}
\begin{tabular}{ | l | l | l | l | l |}
    \hline
     {\bf $N$} & Direct Add & $\#a(DVM, N)$ & Direct Multi & $\#m(DVM, N)$ \\
\hline \hline
4&  12  & 10  & 16  &  22  \\ \hline
8& 56 & 40  &   64 & 88 \\ \hline
16  &240  & 152  & 256  &  296 \\ \hline
   32   & 992  &  576 & 1024  &  960  \\ \hline
64  & 4032  & 2208  & 4096 & 3168 \\ \hline
128 & 16256 & 8576 & 16384 & 10880 \\ \hline
256 &65280 &33664 &65536  & 39040 \\ \hline
512 & 261632 & 133120 & 262144 & 145408  \\ \hline
1024 & 1047552 & 528896 &  1048576 & 556544 \\ \hline
2048 & 4192256 & 2107392 &  4194304 &  2168832\\ \hline
4096 & 16773120 & 8411136 &  16777216 & 8546304 \\ \hline
\hline
\end{tabular}
\label{tbl:DVMcost}
\end{table}

\section{Frequently used Abbreviations and Notations}
\label{fab}

\subsection{Abbreviations}
\begin{tabular}{ll}
Cyberphysical systems & CPS\\
Delay Vandermonde matrix & DVM \\
Discrete Fourier transform & DFT\\
Fast Fourier transform & FFT \\
Forward error of the DVM algorithm & Err-DVM\\
Forward error of the scaled DVM algorithm & Err-sDVM\\
Integrated circuit & IC\\
Internet of things & IoT\\
mm wave & mmW \\
multiple-input multiple-output & MIMO\\
radio-frequency & RF \\
Scaled delay Vandermonde matrix & sDVM\\
Signal to noise ratio & SNR\\
\end{tabular}

\subsection{Notations}
\begin{tabular}{ll}
Circular frequency & $\omega$\\
Companion matrix of $p(z)$ & ${\bf C}_{\frac{N}{2}}$ \\
Diagonal matrix & $\tilde{\bD}_{\frac{N}{2}}=\operatorname{diag}[\alpha^{l}]_{l=0}^{\frac{N}{2}-1}$ \\
DVM & ${\bf A}_N:={\bf A}_{N, \alpha}=[\alpha^{kl}]_{k=1,l=0}^{N,N-1}$\\
DVM algorithm & ${\bf dvm}(N, \alpha, {\bf z})$ \\
Node & $\alpha \equiv e^{-j\omega\tau}$\\
Number of additions & \\
in computing  & $\#a(DVM, N)$\\
DVM algorithm & \\
Number of additions & \\
in computing  & $\#a(sDVM, N)$\\
scaled DVM algorithm & \\
Number of multiplications & \\
in computing  & $\#m(DVM, N)$\\
DVM algorithm & \\
Number of multiplications & \\
in computing  & $\#m(sDVM, N)$\\
scaled DVM algorithm & \\
Polynomial with zeros $\alpha^{2k}$ & $p(z)$ \\
Scaled DVM & ${\bf \tilde{A}}_N:={\bf \tilde{A}}_{N, \alpha}=[\alpha^{kl}]_{k,l=0}^{N-1}$\\
Scaled DVM algorithm & ${\bf sdvm}(N, \alpha, {\bf z})$ \\
Speed of light & $c$\\
Temporal frequency & $f$ \\
true-time-delay & $\tau$\\
\end{tabular}

\end{document}